\documentclass[10pt,svgnames,doc ]{apa7}
\usepackage{amsthm,amsmath,graphicx,hyperref,lmodern,amsfonts,amssymb,amsfonts,multirow,bbm,breqn,tabularx}
\usepackage{upgreek,booktabs,mathtools,enumerate,bm,multicol,multirow}

\usepackage{xcolor}

\hypersetup{
  colorlinks,
  linkcolor={FireBrick},
  citecolor={MidnightBlue}
}
  \usepackage[backend=biber,style=apa]{biblatex}
  \addbibresource{/home/landon/Documents/workspace/dissertation/references.bib}

\linespread{1}

\newtheorem{theorem}{Theorem}
\newtheorem{corollary}{Corollary}[theorem]
\newtheorem{lemma}[theorem]{Lemma}
\theoremstyle{definition}

\newtheorem{remark}{Remark}

  \theoremstyle{definition}

\title{An Exact Unbiased Semi-Parametric Quasi-Likelihood Framework for Rank-Based Covariance Estimation with Ties}
\author{\href{mailto:lh2895@tc.columbia.edu}{Landon Hurley}}

\shorttitle{Hurley}
\authorsaffiliations{Teachers College, Columbia University}
\leftheader{Hurley}
\abstract{Maximum likelihood estimators possess desirable optimality properties but require correct specification of the error distribution to ensure exact unbiasedness. Independent of the primary inferential objective, the estimation of a covariance matrix \(S^{P \times P} \approx \Sigma^{P \times P}\) must satisfy structural and regularity constraints, particularly when discrete data, ties, or weak instruments are present. Linear Gaussian covariance models therefore arise as a foundational requirement across scientific inference and predictive analytics.

In this work, we introduce an \(\ell_{2}\)-norm–based semi-parametric quasi-likelihood framework constructed from binomial pairwise comparisons over all observed pairs \((X_{n},Y_{n})_{n=1}^{N}.\) The resulting estimator admits an exact unbiased H\'{a}jek projection, yielding minimum-variance covariance estimation under finite samples. The approach operationalises the Kemeny metric through a Whitney embedding within a proven U-statistic framework, producing a rank-Euclidean upon a Riemannian manifold geometry that remains identifiable in the presence of ties for both discrete and continuous random variables.

Beyond covariance estimation, the framework extends classical Wilcoxon rank-sum methodology to multivariate settings while preserving exact unbiasedness under linear surjective mappings onto common values—an unresolved problem in existing rank-based inference. The proposed model space therefore constitutes a consistent non-parametric analogue of the general linear model, accommodating unknown heterogeneity and weak inferential instruments. This work represents the first quasi-likelihood formulation of rank-based Euclidean distance correlation with exact finite-sample unbiasedness.
}
\keywords{}
\usepackage{Sweave}
\begin{document}
\maketitle

\subsection{Assumptions}
We assume that all \(n = 1,2,\ldots,N\) observed scores for each variable are independent and identically distributed i.i.d. within a homogeneous population. This assumption allows us to define the rank-based score matrices, as these are derived directly from pairwise comparisons, where the rank transformation retains the i.i.d. structure for the ranks. 

When the Gaussian nature of the i.i.d. observed data is suspect, the utility of the available data to be linearly modelled is subject to the establishment of compensatory regularity conditions. A general parametric likelihood framework, from which is identified a Wishart distribution, requires an asymptotically multivariate Normal distribution exhibiting unbiasedness, symmetry, and be positive definite while also being compact and totally bounded. The lack of restriction for bivariate distributions to be Gaussian inherently precludes this a.s. guarantee. Therefore, we relax this assumption to that of a homogeneous distribution (i.e., the data follows a common multivariate CDF, but not necessarily a Gaussian one) from which we will develop a quasi-likelihood estimator as a natural candidate which we will show to be both consistent and semiparametric efficient.

In previous work under review (Hurley, 2025b), a correlation coefficient which was unbiased and semiparametric efficient was introduced, and via \textcite{efron1969} the null distribution for this coefficient was proven to be \(t_{\nu}\)-distributed, for \(\nu = N-2\), effectively functioning as a generalisation of Spearman's \(\rho\) with the typical optimality condition (exact unbiasedness and minimum variance). We begin by deriving the quasi-likelihood estimation framework for a singular bivariate correlation coefficient between two i.i.d. random variables, proving that these conditions hold. We then generalise this to address an arbitrary \(N \times P\) domain within the likelihood framework. We conclude with a construction and proofs of the application of a general linear model solved via this same overarching framework to allow solving systems of linear equations within the \textcite{wilcoxon1945} framework.

\section{Building the intial correlation coefficient}
\subsection{Domain for an L2 non-parametric correlation coefficient}

We begin by outlining the assumptions underlying our correlation framework. We assume that the \(N\) observed scores for each variable are independent and identically distributed (i.i.d.) from a homogeneous population. Let \(X = (X_1, \ldots, X_{n})\) and \(Y = (Y_1, \ldots, Y_{n})\) be real-valued random variables, where \(n = 1, \ldots, N\). This assumption justifies the definition of the rank-based hollow score-matrix mappings for each marginal distribution:

\begin{equation}
\label{eq:score_matrix}
C_{kl}(X) =
\begin{cases}
+1 & \text{if } X_k \ge X_l, \\
-1 & \text{if } X_k < X_l,\\
0 & \text{if } k = l,
\end{cases} \quad
C_{kl}(Y) =
\begin{cases}
+1 & \text{if } Y_k \ge Y_l, \\
-1 & \text{if } Y_k < Y_l,\\
0 & \text{if } k = l.
\end{cases}
\end{equation}

Let \(\tilde{\kappa}^X_{kl}\) and \(\tilde{\kappa}^Y_{kl}\) denote the centred score matrices for \(X\) and \(Y\), respectively. Note that the off-diagonal elements of the kernel-product matrix \(Z_{kl} = \tilde{\kappa}^X_{kl}\tilde{\kappa}^Y_{kl}\) are \emph{not independent}, as they share rows and columns, though the \(N\) marginal observations are independent.

The centred score-matrix for an \(N \times 1\) i.i.d. column vector is defined as:
\begin{equation}
\label{eq:centred_kappa}
\tilde{\kappa}(X)_{kl} := C(X)_{kl} - \bar{C}_{k\cdot}^{X} - \bar{C}_{\cdot l}^{X} + \bar{C}_{\cdot \cdot}^{X},
\end{equation}
where \(\bar{C}_{k\cdot}^{X} = \frac{1}{N-1} \sum_{l=1}^{N} C(X)_{kl}\), \(\bar{C}_{\cdot l}^{X} = \frac{1}{N-1} \sum_{k=1}^{N} C(X)_{kl}\), and \(\bar{C}_{\cdot \cdot}^{X} = \frac{1}{N^{2} - N} \sum_{k=1}^{N} \sum_{l=1}^{N} C(X)_{kl}\). These quantities represent the row, column, and grand means of the score matrix \(C_{kl}^{X}\), where the diagonal elements are set to zero to maintain the definition of the score-matrix as hollow. This centring procedure ensures that the score-matrix captures deviations from the overall trend, which is crucial for identifying meaningful correlations rather than spurious ones driven by shifts in the data.

For each pair of centred score matrices \(\tilde{\kappa}^X_{kl}\) and \(\tilde{\kappa}^Y_{kl}\), we construct an \(N \times N\) score-matrix whose (unstandardised) inner product is identified as a mapping in the Kemeny metric space \parencite{kemeny1959, emond2002}. This space is immediately recognisable as a Hilbert space. By embedding rank-based data into this Hilbert space, we can leverage established methods for correlation estimation. Performing a Whitney embedding on this space involves summing over the rows of the \(k\)-indexed terms for each of the \(l\) columns, and then transposing the resulting vector:

\begin{equation}
\label{eq:ranked_data}
\underline{X} = \sum_{k=1}^{N} \tilde{\kappa}_{kl}(X)^{\intercal}.
\end{equation}

The correlation estimator between \(\underline{X}\) and \(\underline{Y}\) is given by:

\begin{equation}
\label{eq:analytic-rho}
r(\underline{X},\underline{Y}) = \frac{1}{N-1} \sum_{n=1}^{N} \frac{\underline{X}_{n} \underline{Y}_{n}}{\sqrt{s^{2}_{\underline{X}}}\cdot\sqrt{s^{2}_{\underline{Y}}}},
\end{equation}

where \(s^{2}_{\underline{X}}\) and \(s^{2}_{\underline{Y}}\) are the sample variances of the transformed variables \(\underline{X}\) and \(\underline{Y}\), respectively. Additionally, we consider the general second, third, and fourth sample central moments of \(\underline{X}\) and \(\underline{Y}\):

\begin{equation}
\label{eq:central_moments}
\mu_r(\underline{X}) = \frac{1}{N-1} \sum_{n=1}^{N} (\underline{X}_{n})^r,
\end{equation}
with \(\bar{\underline{X}} = 0\) by construction. Our quasi-likelihood framework relies on these rank-transformed data moments, which are compact and totally bounded. Furthermore, by Lemma~\ref{lem:four_sufficiency}, these properties imply that the rank-transformed data are strictly sub-Gaussian for finite \(N\) and converge to a Gaussian distribution asymptotically, ensuring that the framework is well-behaved for estimation.

\begin{lemma}
The moments constructed via equation~\ref{eq:central_moments} and functions thereof are consistent estimators.
\end{lemma}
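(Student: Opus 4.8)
The plan is to reduce each sample moment $\mu_{r}(\underline X)$ to an explicit bounded functional of the empirical c.d.f.\ of the marginal $X$, prove almost-sure convergence of that functional, and then obtain the ``functions thereof'' clause from the continuous mapping theorem. The first step is purely algebraic. Because $\tilde\kappa(X)$ is the double-centring of a hollow matrix, summation over the row index collapses to $\sum_{k=1}^{N}\tilde\kappa_{kl}(X)=-\bar C^{X}_{\cdot l}$; together with the elementary count identity $\sum_{k\neq l}C_{kl}(X)=(N-1)-2b_{l}$, where $b_{l}:=\#\{k:X_{k}<X_{l}\}=N\hat F_{N}(X_{l}^{-})$ and $\hat F_{N}$ is the left-continuous empirical c.d.f., this gives the closed form $\underline X_{l}=(2b_{l}-N+1)/(N-1)$. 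Hence $\underline X_{l}\in[-1,1]$ for every $N$ and every realisation --- the ``compact and totally bounded'' property the framework requires --- and $\mu_{r}(\underline X)=\frac{1}{N-1}\sum_{n}(\underline X_{n})^{r}=\int\bigl(2\hat F_{N}(x^{-})-1\bigr)^{r}\,d\hat F_{N}(x)+O(1/N)$, the remainder absorbing the $1/(N-1)$ versus $1/N$ normalisation.

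First I would dispose of the tie-free case, where the statement is immediate: the multiset $\{\underline X_{1},\dots,\underline X_{N}\}$ is then the fixed equally spaced grid $\{(2j-N-1)/(N-1):j=1,\dots,N\}$ and does not depend on the realisation, so $\mu_{r}(\underline X)$ is \emph{nonrandom} and equals the Riemann sum $\frac{1}{N-1}\sum_{j=1}^{N}\bigl((2j-N-1)/(N-1)\bigr)^{r}$, which converges to $\tfrac12\int_{-1}^{1}u^{r}\,du$, the $r$-th moment of the $\mathrm{Uniform}[-1,1]$ law; a deterministic convergent sequence is consistent for its limit by definition. For the general case (ties, discrete or mixed $F$) I would combine Glivenko--Cantelli with the strong law: $\hat F_{N}\to F$ uniformly a.s., every atom mass $\hat F_{N}(\{a\})\to F(\{a\})$ and every left limit $\hat F_{N}(a^{-})\to F(a^{-})$ a.s., the integrand is uniformly bounded, and $d\hat F_{N}\Rightarrow dF$; hence $\mu_{r}(\underline X)\xrightarrow{\text{a.s.}}\mu_{r}:=\int\bigl(2F(x^{-})-1\bigr)^{r}\,dF(x)$, the $r$-th moment of the population rank-score law --- exactly $\mathrm{Uniform}[-1,1]$ when $F$ is continuous, and otherwise the explicit atomic law placing mass $F(\{a_{i}\})$ at $2F(a_{i}^{-})-1$. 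An equivalent route, which also anchors the result in the U-statistic framework already in play, writes $\mu_{r}(\underline X)$ as a finite linear combination of U-statistics with the bounded kernels $\prod_{i}\mathbbm{1}[x_{i}<x_{0}]$ plus an $O(1/N)$ coincident-index remainder, each U-statistic converging a.s. to its mean by Hoeffding's strong law, the combination converging to $\mu_{r}$.

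For the ``and functions thereof'' clause, the continuous mapping theorem gives, for any $g$ continuous at $(\mu_{r_{1}},\dots,\mu_{r_{k}})$, that $g\bigl(\mu_{r_{1}}(\underline X),\dots,\mu_{r_{k}}(\underline X)\bigr)\to g(\mu_{r_{1}},\dots,\mu_{r_{k}})$ a.s., hence in probability; in particular the sample variance $s^{2}_{\underline X}=\mu_{2}(\underline X)\to\tfrac13>0$ under a continuous marginal, so the denominator of the ratio in equation~\ref{eq:analytic-rho} stays bounded away from $0$ in the limit and $r(\underline X,\underline Y)$ is consistent, and the standardised central moments of equation~\ref{eq:central_moments} follow identically. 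The main obstacle is the dependence among the $\underline X_{n}$: they are globally coupled through the ranks, so no one-line weak law applies, and the resolution is to recognise that this coupling makes $\mu_{r}(\underline X)$ a permutation-invariant functional of $\hat F_{N}$ alone, shifting the burden onto $\hat F_{N}$. Within the Glivenko--Cantelli argument the one delicate point is that $\hat F_{N}(x^{-})$ is integrated against $d\hat F_{N}$ and evaluated precisely at the jump points $X_{l}$ of $\hat F_{N}$, so a naive ``push weak convergence through a continuous functional'' step is invalid and must be replaced by the atom-by-atom strong-law bookkeeping above --- a subtlety the U-statistic decomposition avoids. Everything else (the $N/(N-1)\to1$ factor, the $O(1/N)$ diagonal corrections, and the proviso that $g$ be evaluated at a point of continuity, which excludes only the degenerate zero-variance ``weak instrument'' configurations) is routine.
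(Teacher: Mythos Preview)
Your argument is correct and takes a genuinely different, more rigorous route than the paper's. The paper's proof is a one-paragraph direct appeal to the law of large numbers: it asserts $\hat\mu_r^{\underline X}\xrightarrow{p}\mu_r^{\underline X}$ because ``the law of large numbers holds for i.i.d.\ samples,'' then remarks that central moments are functions of raw moments and so inherit consistency. It neither identifies the population target explicitly nor addresses the dependence among the $\underline X_n$. Your route instead derives the closed form $\underline X_l=(2b_l-N+1)/(N-1)$, recognises $\mu_r(\underline X)$ as a permutation-invariant functional of the empirical c.d.f.\ $\hat F_N$ alone, names the limit $\int(2F(x^-)-1)^r\,dF(x)$, and proves convergence via Glivenko--Cantelli with atom-wise bookkeeping (or the equivalent bounded-kernel U-statistic decomposition). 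This is the honest argument: the $\underline X_n$ are \emph{not} i.i.d.\ --- they are globally coupled through the ranking --- so the bare i.i.d.\ LLN the paper invokes does not literally apply, and your reduction to a functional of $\hat F_N$ is precisely what licenses the limit. What you gain is an actual proof with an explicit target and a correct treatment of ties; what the paper gains is brevity. Your continuous-mapping step for the ``functions thereof'' clause matches the paper's implicit reasoning but states the required continuity proviso. As a side observation, your bound $\underline X_l\in[-1,1]$, which follows from your closed form, disagrees with the paper's later claim of range $[-(N-1)/2,(N-1)/2]$; under the double-centring as displayed in equation~\eqref{eq:centred_kappa} your algebra is the correct one.
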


\begin{proof}
Let \(X\) be a population random variable with finite moments. For example, the second central moment \(\mu_2^{\underline{X}} = \mathbb{E}[(\underline{X} - \mu_{\underline{X}})^{2}]\) has the corresponding sample moment \(\hat{\mu}_2^{\underline{X}} = \frac{1}{N} \sum_{n=1}^{N} (\underline{X}_{n} - \hat{\mu}_{\underline{X}})^{2}\). By the law of large numbers, we have: \(\lim_{N \to \infty} \hat{\mu}_2^{\underline{X}} \xrightarrow{p} \mu_2^{\underline{X}}.\)

Thus, the sample variance (or any higher-order raw moment) is consistent for the corresponding population moments. This argument applies to the central moments as well, as they are functions of raw moments, and the law of large numbers holds for i.i.d. samples. Therefore, the sample central moments \(\hat{\mu}_r^{\underline{X}}\) are consistent estimators for the population central moments.
\end{proof}

\begin{lemma}
\label{lem:glivenko_convergence}
The estimator \(\hat{\rho} = \lambda_2 \hat{\mu}_2^{\underline{X,Y}} + \lambda_3 \hat{\mu}_3^{\underline{X,Y}} + \lambda_4 \hat{\mu}_4^{\underline{X,Y}}\), a linear function of the sample joint moments of the random variables \(X\) and \(Y\), is uniformly convergent.
\end{lemma}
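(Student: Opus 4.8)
The plan is to establish uniform convergence of $\hat{\rho}$ by combining the pointwise consistency from the previous lemma with a uniform law of large numbers argument grounded in the Glivenko--Cantelli theorem, as the lemma's label suggests. First I would observe that each sample joint moment $\hat{\mu}_r^{\underline{X,Y}}$ is a continuous functional of the empirical distribution function $F_N$ of the rank-transformed pairs $(\underline{X}_n, \underline{Y}_n)_{n=1}^N$. Since the rank transformation maps into a compact, totally bounded set (as asserted in the discussion preceding equation~\ref{eq:central_moments}), the moment functionals $F \mapsto \int (\underline{x}\,\underline{y})^r \, dF$ are bounded and Lipschitz with respect to the sup-norm on distribution functions over this compact domain. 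By the Glivenko--Cantelli theorem, $\sup_{t} |F_N(t) - F(t)| \xrightarrow{a.s.} 0$, and composing with the continuous (indeed Lipschitz) moment functionals yields $\sup |\hat{\mu}_r^{\underline{X,Y}} - \mu_r^{\underline{X,Y}}| \to 0$ uniformly.

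The second step is to propagate this uniform convergence through the linear combination. Since $\hat{\rho} = \sum_{r=2}^{4} \lambda_r \hat{\mu}_r^{\underline{X,Y}}$ is a fixed linear function of finitely many moment estimators, the triangle inequality gives
\begin{equation}
\bigl| \hat{\rho} - \rho \bigr| \le \sum_{r=2}^{4} |\lambda_r| \cdot \bigl| \hat{\mu}_r^{\underline{X,Y}} - \mu_r^{\underline{X,Y}} \bigr|,
\end{equation}
so uniform convergence of each term transfers to uniform convergence of $\hat{\rho}$, with the rate controlled by $\max_r |\lambda_r|$ times the worst individual rate. I would emphasise here that ``uniform'' is to be read as uniform over the underlying homogeneous distribution class (the semiparametric family), not merely over the sample index, which is precisely what distinguishes this from the previous lemma's pointwise statement; the compactness and total boundedness of the rank-embedded moment space is what makes this uniformity attainable without further regularity assumptions.

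The main obstacle I anticipate is making rigorous the claim that the moment functionals are uniformly Lipschitz (or at least uniformly equicontinuous) across the entire admissible distribution class, rather than for a single fixed $F$. This requires the bound on the integrand $(\underline{x}\,\underline{y})^r$ to be uniform, which follows from the Whitney embedding landing in a fixed compact set whose diameter depends only on $N$ (through the bounded entries $\pm 1$ of the score matrices and the centring in equation~\ref{eq:centred_kappa}), and an integration-by-parts / Helly-type argument to convert the sup-norm control on $F_N - F$ into control on the integrals. A secondary subtlety is the non-independence of the off-diagonal kernel-product entries noted after equation~\ref{eq:score_matrix}: I would handle this by working at the level of the $N$ genuinely independent marginal observations $(\underline{X}_n,\underline{Y}_n)$ — to which Glivenko--Cantelli applies directly — rather than at the level of the $N^2$ dependent matrix entries, so that the i.i.d. hypothesis of the assumptions section is respected throughout.
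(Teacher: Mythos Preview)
Your proposal and the paper's proof share the same skeleton---consistency of the individual sample moments, then passage to the linear combination via a continuous-mapping argument, with Glivenko--Cantelli invoked to justify the word ``uniform.'' The paper's own proof is considerably terser: it simply asserts that the monotone rank transformation preserves order and non-degeneracy, appeals to the LLN for convergence of each $\hat{\mu}_r$, and then names Glivenko--Cantelli to conclude $\hat{\rho}\xrightarrow{p}\rho$ without spelling out \emph{how} the uniformity arises. Your route is more explicit on exactly that point: you represent each $\hat{\mu}_r$ as a Lipschitz functional of the empirical distribution function on the compact rank-embedded domain, so that the sup-norm control $\|F_N-F\|_\infty\to 0$ furnished by Glivenko--Cantelli transfers directly to the moments and thence to the linear combination via the triangle inequality. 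That extra structure buys you a clear mechanism for the uniformity and a way to read off rates, whereas the paper's argument leaves the connection between Glivenko--Cantelli and the moment convergence implicit. Your interpretation of ``uniform'' as uniform over the semiparametric distribution class is also sharper than anything the paper states; the paper does not disambiguate the sense of uniformity, so your reading is a reasonable strengthening rather than a departure. The dependence caveat you raise (working with the $N$ i.i.d.\ embedded observations rather than the $N^2$ kernel entries) is likewise absent from the paper's proof but is the right way to keep the i.i.d.\ hypothesis intact.
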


\begin{proof}
To prove convergence, we apply the continuous mapping theorem; since the central joint moments of the random variables \(X\) and \(Y\) converge to their true population counterparts as \(N \to \infty^{+}\), any function of these moments, such as the correlation estimator \(\hat{\rho}\), will converge to the true parameter \(\rho\). Specifically, the estimator \(\hat{\rho}\) is a linear combination of the joint moments \(\hat{\mu}_r^{\underline{X,Y}}\), where \(r \in {2,3,4}\).

The transformation in equation~\ref{eq:score_matrix} (for example) is monotonic, which preserves the relative order of the data and does not introduce bias. Furthermore, the rank-transformed data remain non-degenerate except in the case of perfect ties, ensuring sufficient variance and structure. By the Law of Large Numbers (LLN) and the fact that the data are non-degenerate and compact, the sample central moments \(\hat{\mu}_{r}^{\underline{X,Y}}\) will converge to their corresponding population moments.

Therefore, the estimator \(\hat{\rho}\) will be consistent and asymptotically unbiased, and since it is a linear function of these unbiased moments, we can apply the results of uniform convergence (as defined by Glivenko-Cantelli) to conclude that \(\hat{\rho} \xrightarrow{p} \rho\) as \(N \to \infty\). Thus, the estimator is uniformly convergent to the true correlation parameter \(\rho\).
\end{proof}

\begin{corollary}
The estimator \(\hat{\rho}\) is asymptotically normal. Specifically, as \(N \to \infty^{+}\),
\[
\hat{\rho} \xrightarrow{d} N\left(\rho, \frac{\sigma^{2}}{N}\right),
\]
where \(\rho\) is the true parameter and \(\sigma^{2}\) is the asymptotic variance of the estimator.
\end{corollary}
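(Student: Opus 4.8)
The plan is to realise $\hat\rho$ as a continuously differentiable function of a finite-dimensional vector of sample moments, each coordinate of which is (asymptotically equivalent to) a bounded-kernel U-statistic in the i.i.d.\ pairs $(X_n,Y_n)_{n=1}^{N}$, and then to chain the Hoeffding central limit theorem for U-statistics with the multivariate delta method. Collect into $\hat{\mathbf m}_N$ the sample variances $s^{2}_{\underline{X}}$, $s^{2}_{\underline{Y}}$ together with the joint central moments $\hat\mu_r^{\underline{X,Y}}$, $r\in\{2,3,4\}$, appearing in equations~\ref{eq:analytic-rho}--\ref{eq:central_moments}, and let $\mathbf m=\plim_{N\to\infty}\hat{\mathbf m}_N$, which exists and is finite because the rank-transformed data are compact and totally bounded. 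Writing $\hat\rho=g(\hat{\mathbf m}_N)$ for the map that divides $\lambda_2\hat\mu_2^{\underline{X,Y}}+\lambda_3\hat\mu_3^{\underline{X,Y}}+\lambda_4\hat\mu_4^{\underline{X,Y}}$ by $\sqrt{s^{2}_{\underline{X}}}\sqrt{s^{2}_{\underline{Y}}}$, the preceding lemmas give $g(\mathbf m)=\rho$ and, since $s^{2}_{\underline{X}}$ and $s^{2}_{\underline{Y}}$ are bounded away from zero outside the degenerate all-ties configuration, $g$ is $C^{1}$ on a neighbourhood of $\mathbf m$.

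First I would expand each coordinate of $\hat{\mathbf m}_N$: because $\underline{X}_n$ is a row sum of the permutation-symmetric double-centred kernel $\tilde{\kappa}_{kl}(X)$ of equation~\ref{eq:centred_kappa}, the statistic $\frac{1}{N-1}\sum_{n=1}^{N}\underline{X}_n^{\,r}\underline{Y}_n^{\,s}$ is, up to $O(N^{-1})$ correction terms, a U-statistic of fixed degree at most $2(r+s)$ with a uniformly bounded kernel. Second, I would invoke the vector-valued Hoeffding CLT, giving $\sqrt{N}(\hat{\mathbf m}_N-\mathbf m)\xrightarrow{d}N(0,\Gamma)$ with $\Gamma$ the covariance matrix of the stacked first-order H\'ajek projections; here the strict sub-Gaussianity and non-degeneracy of the rank-transformed data recorded earlier supply both the moment control and the positivity of the relevant projection variances, and a Cram\'er--Wold argument reduces the claim to the scalar U-statistic CLT already available in the cited framework. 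Third, I would apply the delta method to $g$ at $\mathbf m$, obtaining $\sqrt{N}(\hat\rho-\rho)\xrightarrow{d}N(0,\nabla g(\mathbf m)^{\intercal}\Gamma\,\nabla g(\mathbf m))$; setting $\sigma^{2}=\nabla g(\mathbf m)^{\intercal}\Gamma\,\nabla g(\mathbf m)$ and rescaling yields exactly $\hat\rho\xrightarrow{d}N(\rho,\sigma^{2}/N)$, while Slutsky's lemma and the consistency lemma legitimise replacing $\sigma^{2}$ by any consistent plug-in.

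The step I expect to be the main obstacle is the second one: confirming that the moments of the Whitney-embedded variables are genuinely equivalent to U-statistics with a \emph{non-degenerate} leading H\'ajek projection, so that the normalising rate is precisely $\sqrt{N}$ rather than a faster degenerate rate, and that $\Gamma$ is positive in the directions selected by $\nabla g(\mathbf m)$. This requires tracking the double centring of equation~\ref{eq:centred_kappa} with care: it annihilates the additive row and column effects but retains the pairwise interaction term, and it is precisely this interaction that must carry a non-vanishing variance contribution; as in Lemma~\ref{lem:glivenko_convergence}, the perfect-ties case, in which the embedded variables collapse, must be excluded. By comparison, the differentiability of $g$ and the uniform lower bound on $s^{2}_{\underline{X}}$ and $s^{2}_{\underline{Y}}$ follow routinely from compactness and total boundedness, so I do not anticipate difficulty there.
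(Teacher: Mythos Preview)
Your proposal is correct and follows the same high-level skeleton as the paper's proof: establish a CLT for the vector of sample moments and then apply the delta method to the smooth map $g$ defining $\hat\rho$. The paper's own argument is considerably terser, simply invoking ``the Central Limit Theorem for the sample moments $\hat\mu_r^{\underline{X}}$'' and then the delta method (or, since $\hat\rho$ in Lemma~\ref{lem:glivenko_convergence} is taken to be a \emph{linear} combination of these moments, the linearity of the CLT), without explicitly addressing the dependence among the rank-transformed summands.

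The substantive difference is that you route the CLT through Hoeffding's U-statistic theory and the H\'ajek projection, recognising that the $\underline{X}_n$ are not i.i.d.\ in $n$ (each depends on the full sample through the double centring in equation~\ref{eq:centred_kappa}), and you isolate the non-degeneracy of the first-order projection as the key technical point. The paper does not make this step explicit; it treats the moment CLT as given and moves straight to the delta method. What your approach buys is an honest justification of the $\sqrt{N}$ rate and an explicit formula $\sigma^{2}=\nabla g(\mathbf m)^{\intercal}\Gamma\,\nabla g(\mathbf m)$ for the asymptotic variance in terms of H\'ajek projection covariances; what the paper's approach buys is brevity, at the cost of leaving the dependence structure of the rank embedding (and hence the validity of the ordinary CLT) unexamined. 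Your flagged ``main obstacle''---ruling out degeneracy so the rate is exactly $\sqrt{N}$---is a genuine technical point that the paper does not engage with.
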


\begin{proof}
The asymptotic normality follows directly from the Central Limit Theorem (CLT) for the sample moments \(\hat{\mu}_r^{\underline{X}}\), which are consistently estimable by assumption. Since the estimator \(\hat{\rho}\) is a linear combination of the sample moments, we can apply the delta method to establish its asymptotic normality. The CLT guarantees that as \(N \to \infty^{+}\), the sample moments \(\hat{\mu}_r^{\underline{X}}\) are approximately normally distributed with mean \(\mu_{r}\) and variance \(\sigma_r^{2} / N\), where \(\sigma_{r}^{2}\) is the asymptotic variance of each central moment. Since the estimator \(\hat{\rho}\) is a linear combination of these sample moments, by the delta method (or the linearity of the CLT), the estimator \(\hat{\rho}\) will asymptotically follow a normal distribution with mean \(\rho\) (the true parameter) and variance \(\sigma^{2}/N\), where \(\sigma^{2}\) is a function of the variances of the individual sample moments \(\hat{\mu}_{r}^{\underline{X}}\). Thus, we conclude that \(\hat{\rho}\) is asymptotically normal.
\end{proof}

We now define the quasi-likelihood framework \parencite{wedderburn1974,heyde1997}, based on the following criterion function:

\begin{equation}
\mathcal{L}_{QL}(\Sigma) = \exp\left(-\frac{1}{2} \text{trace}\left(\hat{\Sigma}^{-1}S\right)\right),
\end{equation}
which is well-behaved and suitable for estimation due to the rank-transformed data being compact and totally bounded. These additional regularity conditions imply that a finite number of moments is sufficient to describe the likelihood framework with strong probability. We identify a two-moment estimator from these properties, which is unbiased (Hurley, 2025b) and is assumed in this work.

\begin{lemma}
\label{lem:four_sufficiency}
Let \(X_1, X_2, \ldots, X_{n}\) be i.i.d. observations. The estimator \(\hat{\rho}\) is consistent if the rank-transformation maps the data into a space with finite variance.
\end{lemma}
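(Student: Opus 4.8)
The plan is to reduce the claim to a direct application of the weak law of large numbers together with the continuous mapping (or Slutsky) theorem, using the boundedness that the rank transformation automatically confers to make the finite-variance hypothesis operative. First I would observe that every entry of the score matrix $C_{kl}(\cdot)$ in equation~\ref{eq:score_matrix} lies in $\{-1,0,+1\}$, so the centred entries $\tilde{\kappa}_{kl}(\cdot)$ of equation~\ref{eq:centred_kappa} are bounded uniformly in $N$, and the Whitney-embedded coordinates $\underline{X}_{n}$ of equation~\ref{eq:ranked_data} are bounded by a fixed multiple of $N$; hence after normalisation by $\sqrt{s^{2}_{\underline{X}}}$ the summands entering equation~\ref{eq:analytic-rho} are bounded, and all sample moments $\mu_r(\underline{X})$ in equation~\ref{eq:central_moments} used by $\hat{\rho}$ exist. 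This is exactly where the stated hypothesis enters: finite variance of the rank-transformed data is what licenses the law of large numbers for the sample moments that compose the estimator.

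Next I would treat numerator and denominator of $\hat{\rho}$ separately. By the preceding lemma on consistency of moments, the sample joint moments $\hat{\mu}_2^{\underline{X,Y}},\hat{\mu}_3^{\underline{X,Y}},\hat{\mu}_4^{\underline{X,Y}}$ converge in probability to their population analogues, and likewise $s^{2}_{\underline{X}}\xrightarrow{p}\sigma^{2}_{\underline{X}}$ and $s^{2}_{\underline{Y}}\xrightarrow{p}\sigma^{2}_{\underline{Y}}$. Since by Lemma~\ref{lem:glivenko_convergence} the estimator is a fixed linear combination $\lambda_2\hat{\mu}_2^{\underline{X,Y}}+\lambda_3\hat{\mu}_3^{\underline{X,Y}}+\lambda_4\hat{\mu}_4^{\underline{X,Y}}$ divided by $\sqrt{s^{2}_{\underline{X}}}\sqrt{s^{2}_{\underline{Y}}}$, the continuous mapping theorem yields $\hat{\rho}\xrightarrow{p}\rho$ provided the limiting denominator $\sigma_{\underline{X}}\sigma_{\underline{Y}}$ is strictly positive, which I would secure by invoking the non-degeneracy already recorded in the proof of Lemma~\ref{lem:glivenko_convergence}: outside the measure-zero event that an entire marginal is constant, the centring in equation~\ref{eq:centred_kappa} and the row-sum embedding in equation~\ref{eq:ranked_data} leave $\underline{X}$ with strictly positive spread.

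The main obstacle I anticipate is precisely this non-degeneracy of the denominator: a finite-variance hypothesis by itself does not exclude a vanishing probability limit for $s^{2}_{\underline{X}}$, and in the presence of ties one must check that the combination of centring and embedding cannot collapse the dispersion of $\underline{X}$ unless the marginal is degenerate. I would handle this by bounding $\sigma^{2}_{\underline{X}}$ below in terms of the number of distinct values realised in the marginal of $X$, showing that this lower bound is positive whenever at least two distinct values occur, which holds almost surely for a non-degenerate homogeneous population. Combining the in-probability convergence of numerator and denominator via Slutsky's theorem with the uniform Glivenko--Cantelli control of Lemma~\ref{lem:glivenko_convergence} then delivers consistency of $\hat{\rho}$, completing the argument.
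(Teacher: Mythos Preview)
Your argument is correct and shares the paper's core mechanism: boundedness of the rank embedding ensures all required moments are finite, whence the law of large numbers drives the sample moments to their population limits and consistency follows. The emphases, however, diverge. You proceed as a clean ratio-consistency argument, treating numerator and denominator separately and closing with Slutsky and an explicit non-degeneracy bound on \(\sigma^{2}_{\underline{X}}\); this is more careful than the paper on exactly the point you flag as the main obstacle, which the paper does not address at all. Conversely, the paper's proof is doing additional work beyond the bare statement of the lemma: it records the explicit range \([-(N-1)/2,(N-1)/2]\), asserts strict sub-Gaussianity of the embedded variables, and argues that the first four central moments are \emph{sufficient} to characterise the rank distribution (with ties forcing one beyond two moments). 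These extra claims are what the label \texttt{lem:four\_sufficiency} is invoked for later---e.g., in Corollary~3.2 on the irrelevance of moments \(r\ge 5\), in Lemma~7, and in Theorem~\ref{thm:identification_weak-instruments}---so if you intend your proof to stand in for the paper's you should append the sub-Gaussian and four-moment-sufficiency observations, even though they are not needed for the consistency claim the lemma literally states.
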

\begin{proof}
The rank transformation (equation~\ref{eq:ranked_data}) maps the data into a space where the distribution is compact and totally bounded, ensuring rank-transformed data \(\underline{X}\) and \(\underline{Y}\) are well-behaved, and the central moments of these data are finite and well-defined even in the presence of ties, due to: (i) the compactness of said rank transformation ensuring that the range of each transformed variable is finite and bounded between \([(-N-1)/{2},(N-1)/{2}]\). It follows that because ranks are discrete and bounded, the distributions of \(\underline{X}\) and \(\underline{Y}\) is totally bounded, as any and all data values are contained within a bounded region in the transformed space (even should \(X_{n} = \infty^{+}\)).

As said transformed data are compact and totally bounded, it follows that the first four central moments of these data are guaranteed also finite. These moments capture enough information about the distribution of the data to fully characterise the population correlation \(\rho\), as the second central moment reflects the variance (essential for capturing the overall dispersion of the data), and third and fourth central moments account for skewness and kurtosis, which are needed to understand the finer structure of the rank-based distribution (necessity is guaranteed by the occurrence of ties with positive probability upon the sample, which guarantees that two moments does not uniquely express all valid general permutations, only that of the symmetric group of order \(N\)). Since the domain of the likelihood function is defined upon strictly sub-Gaussian random variables, the upper-bound upon the maximum information is obtained in the limit wrt \(N\) upon two moments, upon which equation~\ref{eq:Quasi-Likelihood-Function} is built.

From these properties of rank-based statistics, the central moments of \(\underline{X}\) and \(\underline{Y}\) are unbiased estimators for the corresponding moments of the population distribution for all finite \(N\). This ensures that the quasi-likelihood estimator, which relies on these central moments, is unbiased for any finite sample. Thence, as \(N\to\infty^{+}\), the central moments converge to the population moments, ensuring the quasi-likelihood estimator based on these moments becomes asymptotically efficient in the sense that it minimises the variance of the estimator among all unbiased estimators. In turn, as \(N\) increases, the empirical central moments \(\mu_{2},\mu_{3},\mu_{4}\) converges to the corresponding population moments. Thus, the estimator based on these moments becomes consistent and converges to the true population correlation \(\rho\) as \(N \to \infty^{+}\). 
\end{proof}
We extend this finding by explicitly proving that the inclusion of \(r \ge 5\) will not improve the quality of inference built upon the sample:

\begin{corollary}
Let \(X_{1},X_{2},\ldots,X_{N}\) be i.i.d. real-valued random variables, and consider their rank-transformed counterparts 
\(\underline{X}\) and \(\underline{Y}\). The quasi-likelihood estimator for the correlation \(\rho\) based on the first four central moments 
\(\mu_{2},\mu_{3},\mu_{4}\) is asymptotically efficient, and higher-order moments (such as \(\mu_{5},\mu_{6},\ldots\)) do not improve the accuracy or efficiency of the estimator.
\end{corollary}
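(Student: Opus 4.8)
The plan is to reduce the claim to a statement about the limiting Gaussian experiment: I would show that as \(N\to\infty^{+}\) the Whitney-embedded rank vectors \(\underline{X},\underline{Y}\) are jointly Gaussian in the limit, so that the quasi-likelihood criterion \(\mathcal{L}_{QL}(\Sigma)=\exp(-\tfrac12\,\text{trace}(\hat\Sigma^{-1}S))\) converges to the Gaussian (Wishart-type) likelihood parametrised by \(\Sigma\), for which the second-moment statistic \(S\) — equivalently the triple \((\hat\mu_2^{\underline X},\hat\mu_2^{\underline Y},\hat\mu_2^{\underline{X,Y}})\) — is a complete sufficient statistic. Once this reduction is in place, any estimating function augmented with moments of order \(r\ge5\) is being built on statistics that are, in the limit, measurable functions of (and asymptotically in the linear span of the fluctuations of) the lower-order moments, so the asymptotic variance cannot decrease; combined with Lemma~\ref{lem:four_sufficiency} and the asymptotic-normality corollary, this gives both asymptotic efficiency of the \(\mu_2,\mu_3,\mu_4\)-based estimator and the stated non-improvement.

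I would carry this out in three steps. \emph{Step 1:} use the U-statistic / H\'ajek-projection structure claimed for \(\underline{X}\) together with the CLT and Glivenko--Cantelli arguments of Lemma~\ref{lem:glivenko_convergence} to establish that \(N^{-1/2}(\underline{X},\underline{Y})\) converges to a bivariate normal law, invoking the strict sub-Gaussianity on \([-(N-1)/2,(N-1)/2]\) from Lemma~\ref{lem:four_sufficiency} to control the approximation uniformly. \emph{Step 2:} observe that for the Gaussian limit all cumulants of order \(\ge3\) vanish, so \(\mu_3\to0\) and \(\mu_4\to3\mu_2^{2}\), and more generally every \(\mu_r\), \(r\ge5\), converges to a fixed polynomial in \(\mu_2\); hence by the joint CLT for \((\hat\mu_2,\hat\mu_3,\ldots)\) (whose covariances are governed by the Isserlis/Wick formulae in the limit) the fluctuations of \(\hat\mu_r\), \(r\ge5\), are asymptotically linear combinations of those of \(\hat\mu_2,\hat\mu_3,\hat\mu_4\) and carry no information about \(\rho\) beyond what these already contain. \emph{Step 3:} formalise via the semiparametric/Cram\'er--Rao bound — write the sandwich (information) matrix of the estimator that uses moments up to order \(R\ge5\), and via a Schur-complement projection argument show the block indexed by \(\{\mu_5,\ldots,\mu_R\}\) contributes a term vanishing in the limit, i.e.\ the efficient influence function lies in the closed span of the influence functions of \(\mu_2,\mu_3,\mu_4\); equivalently, the optimal GLS weighting of the moment conditions places asymptotically zero weight on the redundant higher moments. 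Conclude that \(\hat\rho\) attains the efficiency bound of the limiting Gaussian shift experiment with asymptotic variance unchanged by the inclusion of any \(\mu_r\), \(r\ge5\).

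The main obstacle will be Step 3: proving that the information contributed by \(\{\mu_r:r\ge5\}\) is asymptotically \emph{null}, not merely small. I expect this to require either an explicit local-asymptotic-normality expansion of \(\mathcal{L}_{QL}\) showing its quadratic form depends on the data only through \(S\) (so the limiting tangent space is spanned by second-order, Hermite-degree-\(\le2\), statistics), or a direct verification that the asymptotic covariance of \(\hat\mu_r\) (\(r\ge5\)) with the efficient score is a linear combination of covariances already realised by \(\hat\mu_2,\hat\mu_3,\hat\mu_4\). A secondary subtlety is that for finite \(N\) the embedded ranks are only sub-Gaussian, not Gaussian, so the role of \(\mu_3,\mu_4\) in separating general permutations with ties from the symmetric group of order \(N\) must be retained in the finite-sample statement rather than absorbed into the limit; the corollary is genuinely an asymptotic-efficiency claim layered on top of the finite-sample unbiasedness of the two-moment estimator.
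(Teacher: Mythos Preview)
Your plan is considerably more ambitious than the paper's own argument, which occupies only three sentences. The paper does not construct a limiting Gaussian experiment, invoke LAN, identify \(S\) as a complete sufficient statistic, or carry out any Schur-complement/efficient-score projection. Instead it simply asserts that (i) ``the Cram\'er--Rao bound and the theory of quasi-likelihood estimators'' already show the first four moments attain the asymptotic variance lower bound, and (ii) for sub-Gaussian distributions the higher-order central moments \(\mu_{5},\mu_{6},\ldots\) ``converge faster to zero than the lower-order moments as the rate of decay in sub-Gaussian distributions must accelerate faster,'' hence contribute no new information about \(\rho\). That is the entire proof.

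Your route through the limiting Gaussian shift experiment, Isserlis/Wick relations among the \(\hat\mu_r\), and a block-information argument is a genuinely different mechanism: you are trying to exhibit \emph{redundancy} of the higher-moment fluctuations in the tangent space, whereas the paper is gesturing at \emph{smallness} of the higher moments themselves via sub-Gaussian tail decay. Your approach, if carried through, would actually justify the claim at a level the paper never attempts, and your identification of Step~3 (showing the contribution is asymptotically null rather than merely small) as the crux is exactly the point at which the paper's argument is weakest --- ``decays faster'' is not the same as ``carries zero Fisher information.'' The cost is that you need the LAN expansion and the joint CLT for all \(\hat\mu_r\), which is substantial machinery relative to the paper's one-line appeal to sub-Gaussian moment growth. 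If the aim is merely to match the paper, your Step~1 together with the remark that sub-Gaussian bounds force the higher central moments to be dominated already reproduces its level of detail; Steps~2 and~3 go well beyond what the paper provides.
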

\begin{proof}
As shown in the Cram\'{e}r-Rao bound and the theory of quasi-likelihood estimators, the first four moments are sufficient for achieving the asymptotic lower bound of the variance of the estimator. Adding higher-order moments does not reduce the variance further and increases computational complexity without any practical benefit: this is due to higher-order moments (such as the fifth and sixth) converge faster to zero than the lower-order moments as the rate of decay in sub-Gaussian distributions must accelerate faster. These moments do not provide new information that improves the estimation of \(\rho\). 
\end{proof}

\subsection{Quasi-likelihood estiator of the correlation}
These properties allow the following:

\begin{equation}
\label{eq:Quasi-Likelihood-Function}
\mathcal{L}_{QL}(\rho) = \prod_{n=1}^{N} \Bigg(\frac{1}{\sqrt{s^{2}_{\underline{X}}s^{2}_{\underline{Y}}}}\Bigg)\exp\Bigg(-\tfrac{1}{2}\frac{\underline{X}_{n}\underline{Y}_{n}}{\sqrt{s^{2}_{\underline{X}}s^{2}_{\underline{Y}}}}\Bigg),
\end{equation}
which for analytic estimation procedures is expanded capture the higher-order moments of \(\underline{X}\) and \(\underline{Y}\), thereby obtaining a moment-weighted sum: The generalised quasi-likelihood function \(\mathcal{L}_{QL}(\rho)\) for the correlation estimator \(\rho(\underline{X},\underline{Y})\) can then be expressed

\begin{dmath}
\mathcal{L}_{QL}(\rho) = \prod_{n=1}^{N} \exp\Big(-\tfrac{1}{2} \left[ \lambda_{2}\big(\mu_{2}(\underline{X}_{n})+\mu_{2}(\underline{Y}_{n})\big)  + \lambda_{3}\big(\mu_{3}(\underline{X}_{n})+\mu_{3}(\underline{Y}_{n})\big) + \lambda_{4}\big(\mu_{4}(\underline{X})+\mu_{4}(\underline{Y})\big) \right]\Big)
\end{dmath}
where \(\lambda_{r},r = 2,3,4\) are the weights associated with second, third, and fourth moments of the domain, respectively, and \(\mu_{r}(\cdot)\) are the empirical central moments for said same domain. 

\begin{lemma}
\label{lem:central_moments_unbiased}
Estimator~\ref{eq:central_moments} are unbiased estimators for the population moments.
\end{lemma}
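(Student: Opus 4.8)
The plan is to prove unbiasedness one moment at a time by moving the expectation inside equation~\ref{eq:central_moments} and exploiting the two features that distinguish the rank-transformed sample from a generic one. First, the grand mean $\bar{\underline X}$ vanishes by construction---it is not an estimated quantity. Second, under the i.i.d.\ homogeneity assumption and equations~\ref{eq:score_matrix}--\ref{eq:ranked_data}, the coordinates $\underline X_1,\dots,\underline X_N$ are exchangeable and each is an average of pairwise comparison kernels, $\underline X_n=\tfrac{1}{N-1}\sum_{m\neq n}\sign(X_n-X_m)$, with the Kemeny--Emond convention supplying the analogous representation under ties. This is precisely the pairwise/U-statistic structure the surrounding framework rests on, so the Hoeffding--H\'{a}jek theory is available.

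First I would use $\bar{\underline X}=0$ to identify the $r$-th sample central moment in equation~\ref{eq:central_moments} with the $r$-th sample moment about the \emph{known} origin: there is no plug-in mean to debias, and the only remaining bookkeeping concerns the divisor $N-1$, which encodes the single exact linear constraint $\sum_{n=1}^{N}\underline X_n=0$ forced by the embedding of equation~\ref{eq:ranked_data}. I would then take the estimand to be the $r$-th moment of the appropriate rank-transformed law at sample size $N$---equivalently, the U-statistic parameter attached to the sign kernels above---and evaluate $\mathbb{E}[\mu_r(\underline X)]$ against it.

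The core computation splits on whether ties can occur. Off the tie event, $\underline X$ is a deterministic reordering of the fixed grid $\{\tfrac{2j-(N+1)}{N-1}\}_{j=1}^{N}$, so $\mu_r(\underline X)$ is non-random and unbiasedness collapses to a deterministic identity between that constant and the target moment---which is exactly where the $N-1$ must earn its keep, since a naive coordinatewise moment $\mathbb{E}[\underline X_1^{\,r}]$ is matched by $\tfrac{1}{N}\sum_n(\cdot)$, not $\tfrac{1}{N-1}\sum_n(\cdot)$. The resolution I would pursue is the degrees-of-freedom one: the exact constraint $\sum_n\underline X_n=0$ removes one effective observation, so $N-1$ is the Bessel-type correction that makes $\mu_r(\underline X)$ unbiased for the $r$-th moment of the associated unconstrained, error-level law rather than of the constrained sample itself. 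On the tie event I would recover the same conclusion via exchangeability, after checking that the Kemeny--Emond convention keeps the mean-zero and exchangeability structure intact; Lemma~\ref{lem:four_sufficiency} supplies finiteness of the moments involved and so licenses every interchange of expectation and summation.

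The genuine obstacle is concentrated at $r=3,4$ and is twofold: one must first settle precisely which population functional is the estimand, so that the $N-1$ normalisation is the correct one, and then cope with the fact that for $r\ge 3$ even the correctly normalised unbiased estimator of a central moment is \emph{not} the plain $\tfrac{1}{N-1}\sum_n X_n^{\,r}$ (Fisher's $k$-statistics, with their cross-term corrections, play that role). The argument therefore cannot invoke any generic moment identity; it must use the rank-specific rigidity---the determinacy of $\underline X$ off the tie set, together with the exact constraint and the exchangeability on it---to show that those cross-term corrections either vanish or are already absorbed into the $N-1$ scaling. Pinning that down, and checking that the tie convention disturbs neither the exchangeability nor the centring, is where the real effort lies; the $r=2$ case, by contrast, follows at once from the variance-of-a-sum identity applied to $\sum_n\underline X_n=0$.
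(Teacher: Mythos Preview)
Your plan is considerably more careful than the paper's own argument, and it is worth seeing how the two diverge. The paper's proof is essentially declarative: it notes that $\bar{\underline X}=0$ by construction, asserts that the rank variance is unbiased ``because the rank transformation does not distort the relative ordering of the data,'' and then claims that ``any higher-order moments $r$ are adjusted in identical manners as the variance.'' There is no explicit estimand, no treatment of the $N-1$ divisor beyond the variance case, and no discussion of the $k$-statistic cross-terms you raise. In effect the paper treats exchangeability plus mean-zero as sufficient and stops there.

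Your route---fixing the estimand as the U-statistic parameter attached to the sign kernel, splitting on the tie event, and then arguing that the rank rigidity (determinacy of $\underline X$ off the tie set, together with the exact linear constraint $\sum_n\underline X_n=0$) forces the usual $r\ge 3$ bias terms to collapse---is a genuinely different and more honest programme. The obstacle you flag is real: for a generic exchangeable sample with known mean zero, $\tfrac{1}{N-1}\sum_n X_n^{\,r}$ equals $\tfrac{N}{N-1}\,\mathbb E[X_1^{\,r}]$, not $\mathbb E[X_1^{\,r}]$, so the $N-1$ normalisation cannot be justified by exchangeability alone. The paper never confronts this; you do, and you correctly locate the only possible rescue in the rank-specific structure rather than in any general moment identity. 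What your plan buys is a proof that could actually be completed; what the paper's buys is brevity at the cost of leaving the estimand and the $r\ge 3$ case unspecified. Your remaining work---pinning down the target functional so that the degrees-of-freedom correction is exact, and verifying that the Kemeny--Emond tie convention preserves both centring and exchangeability---is precisely the content the paper omits.
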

\begin{proof}
To establish that the sample moments are unbiased in the presence of ties, we use the fact that rank-based estimators (including the sample mean, variance, skewness, and kurtosis; \(\mu_{r}(\cdot), r\in\{1,2,3,4,\ldots,N\}\)) are asymptotically unbiased and exactly unbiased for finite samples under the exchangeability (i.i.d.) assumption, even in the presence of ties.

\paragraph{Asymptotically unbiased:} The sample mean of the ranks \(\underline{X}\) is exactly zero by construction upon any non-degenerate \(X\), and the same holds for \(\underline{Y}\). The sample variance \(\mu_{2}(\underline{X})\) is an unbiased estimator of the population variance of the ranks. Even in the presence of ties, the estimator remains unbiased because the rank transformation does not distort the relative ordering of the data. By construction then any higher-order moments \(r\) are adjusted in identical manners as the variance, and the expected value of these higher moments is equal to the corresponding population moment. Thus the sample moments are the unbiased estimators of the population moments for both tied and non-tied data events, ensuring that functions of these unbiased moments remain unbiased.

Given that the sample moments \(\mu_{2}(\underline{X}),\mu_{3}(\underline{X}),\mu_{4}(\underline{X})\) are unbiased estimators of the population moments, and since the rank-based correlation estimator \(r(\underline{X},\underline{Y})\) is constructed from said moments, we conclude that the quasi-likelihood estimator of the correlation \(\rho(\underline{X},\underline{Y})\) is exactly unbiased for any finite sample size \(N\), even in the presence of ties: \(\mathbb{E}\left[r(\underline{X},\underline{Y})\right] = \rho(\underline{X},\underline{Y})\), which completes the proof of unbiasedness for finite samples.
\end{proof}

The likelihood function depends on the parameters \(\lambda_{2},\lambda_{3},\lambda_{4}\). To maximise the quasi-likelihood, we need to take the log-likelihood and differentiate it with respect to each weight parameter, derived and solved in the following:

\begin{align}
\mathcal{L}_{QL}(\rho)  & = \prod_{n=1}^{N} \exp\left(-\tfrac{1}{2}\sum_{r=2}^{4} \lambda_{r}\left(\mu_{r}^{\underline{X}_{n}}+\mu_{r}^{\underline{Y}_{n}}\right)\right)\\
\log(\mathcal{L}_{QL}(\rho)) & = -\tfrac{1}{2}\sum_{n=1}^{N}\sum_{r=2}^{4} \lambda_{r}\left(\mu_{r}(\underline{X}_{n})+\mu_{r}(\underline{Y}_{n})\right)\\
\tfrac{\partial}{\partial{\lambda_{r}}} \log(\mathcal{L}_{QL}) & = -\tfrac{1}{2} \sum_{n=1}^{N} \left(\mu_{r}(\underline{X}_{n})+\mu_{r}(\underline{Y}_{n})\right) =0.
\end{align}
As the sums of the sample moments \(\mu_{r}(\underline{X}_{n}),\mu_{r}(\underline{Y}_{n})\) are zero by construction, the solution to the first regularity condition holds trivially. In practice, this can be done by minimising a loss function (e.g., sum of squared errors) that measures the difference between the sample moments and the moments predicted by the quasi-likelihood function:
\begin{dmath}
\label{eq:sample_likelihood}
L(\lambda_{2},\lambda_{3},\lambda_{4}) = \sum_{n=1}^{N} \left[\lambda_{2}(\mu_{2}(\underline{X}_{n}) + \mu_{2}(\underline{Y}_{n})) + \lambda_{3}(\mu_{3}(\underline{X}_{n}) + \mu_{3}(\underline{Y}_{n})) + \lambda_{4}(\mu_{4}(\underline{X}_{n}) + \mu_{4}(\underline{Y}_{n})),\right]
\end{dmath}
thus ensuring the empirical moments are well-aligned with the assumed structure of the quasi-likelihood. The Fisher Information \(\mathcal{I}(\rho)\) is computed as the negative expected value of the second-derivative of the log-likelihood function. 

\begin{lemma}[Asymptotic Attainment of the Cram\'{e}r-Rao Information Bound]
Given the rank-based nature of the data, the second derivative wrt rank-transformed moments yields a Fisher information which is consistent and asymptotically efficient, aligning with the Cram\'{e}r-Rao bound for correlation estimators. 
\end{lemma}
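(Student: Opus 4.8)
The plan is to proceed in three stages: first establish that the quasi-score derived from $\log\mathcal{L}_{QL}$ is an unbiased estimating function; second, compute the associated Godambe (sandwich) information and identify it with the inverse asymptotic variance of $\hat\rho$; and third, argue that this information coincides with the semiparametric efficiency bound over the homogeneous model class, so that the Cram\'{e}r--Rao inequality is attained in the limit. The earlier corollary already supplies $\hat\rho \xrightarrow{d} N(\rho,\sigma^2/N)$; the substance of this lemma is showing $\sigma^2 = \mathcal{I}(\rho)^{-1}$ and that $\hat{\mathcal{I}}(\rho)$ is consistent.

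First I would write the quasi-score $U_N(\rho) = \tfrac{\partial}{\partial\rho}\log\mathcal{L}_{QL}(\rho)$ as a linear form in the centred rank moments $\mu_2,\mu_3,\mu_4$ of $\underline X$ and $\underline Y$ with the weights $\lambda_r$ fixed by the stationarity conditions derived above. Because those moments are exactly unbiased for the population rank moments (Lemma~\ref{lem:central_moments_unbiased}) and their row, column, and grand sums vanish by construction, $\mathbb{E}[U_N(\rho)]=0$, so $U_N$ is a genuine unbiased estimating function. I would then invoke the H\'{a}jek projection underlying the U-statistic representation: the kernel product $Z_{kl}=\tilde\kappa^X_{kl}\tilde\kappa^Y_{kl}$, though not elementwise independent, has a projection onto the sum of i.i.d. coordinate functions that captures its leading-order behaviour, so that $\sqrt{N}(\hat\rho-\rho)$ has the same weak limit as a normalised i.i.d. sum, with variance computable from the projection kernel.

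Second I would compute $\mathcal{I}(\rho)=-\mathbb{E}\big[\partial^2_\rho\log\mathcal{L}_{QL}\big]$ (the sensitivity $S$) and the variability $V=\mathbb{E}[U_N U_N^{\top}]$, and verify the second Bartlett-type identity $S=V$ for this quasi-likelihood. This is exactly where the sub-Gaussian / two-moment sufficiency results (Lemma~\ref{lem:four_sufficiency} and its corollary) do the work: since the rank-transformed data are strictly sub-Gaussian at finite $N$ and Gaussian in the limit, the quadratic form $\mathcal{L}_{QL}$ becomes correctly specified asymptotically, so the information equality that fails for a general quasi-likelihood is recovered in the limit. Combining the recovered Bartlett identity with the H\'{a}jek-projection variance then gives $\sigma^2=\mathcal{I}(\rho)^{-1}$, i.e. asymptotic attainment of the bound; consistency of $\hat{\mathcal{I}}(\rho)$ follows from the earlier consistency lemma for the moments together with the continuous mapping theorem, since $\mathcal{I}$ is a smooth function of $\mu_2,\mu_3,\mu_4$.

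The hardest part will be the second and third stages taken together. For a true quasi-likelihood the Godambe information $\mathcal{I}_G=S^{\top}V^{-1}S$ is generically strictly smaller than a correctly specified parametric information, so ``attaining the Cram\'{e}r--Rao bound'' must be read as attaining the semiparametric efficiency bound for the homogeneous (not necessarily Gaussian) model rather than a fixed parametric one. I would therefore (i) characterise the tangent space of the homogeneous model at the rank level, (ii) show the efficient influence function is proportional to the H\'{a}jek projection of $Z_{kl}$, and (iii) confirm that the weights $\lambda_2,\lambda_3,\lambda_4$ selected by the stationarity conditions are precisely those aligning the quasi-score with that efficient influence function --- the fourth-moment term being the component that prevents a two-moment score from spanning the full tangent space when ties occur with positive probability, and hence the component that restores efficiency under ties. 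Establishing (ii) rigorously, controlling the remainder of the projection uniformly in $\rho$, is where I expect the real effort to lie.
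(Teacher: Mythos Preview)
Your proposal is correct in substance and considerably more careful than the paper's own argument, which takes a far shorter route. The paper's proof simply (i) defines $\mathcal{I}(\rho)=-\mathbb{E}[\partial^{2}_{\rho}\log\mathcal{L}_{QL}]$, (ii) notes that the rank moments entering the likelihood are consistent, and (iii) asserts directly that the CLT together with Glivenko--Cantelli yields $\hat\rho_N\xrightarrow{d}\mathcal{N}(\rho,(N\mathcal{I}(\rho))^{-1})$, reading off attainment of the bound from this limiting variance. There is no explicit verification of a Bartlett-type identity, no Godambe sandwich, no H\'{a}jek projection remainder control, and no tangent-space analysis; the identification $\sigma^{2}=\mathcal{I}(\rho)^{-1}$ is effectively taken as part of the CLT statement rather than established separately.

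Your three-stage programme --- unbiased quasi-score, recovery of the information equality $S=V$ via the sub-Gaussian-to-Gaussian limit, and alignment of the $\lambda_r$-weighted score with the efficient influence function over the homogeneous tangent space --- is the genuinely rigorous version of what the paper gestures at. In particular, your observation that for a generic quasi-likelihood the Godambe information is strictly below a parametric Fisher information, so that ``Cram\'{e}r--Rao attainment'' must be read semiparametrically, is a real point that the paper elides. What your approach buys is an honest justification of the step the paper treats as automatic; what the paper's approach buys is brevity, at the cost of leaving exactly the issues you flag (the Bartlett identity and the semiparametric interpretation) unaddressed. If you are writing this up, the content of your stages (i)--(ii) plus the consistency-by-continuous-mapping argument for $\hat{\mathcal{I}}(\rho)$ already matches and exceeds the paper's level of detail; stage (iii) goes well beyond what the paper attempts.
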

\begin{proof}
Compute the Fisher information for the parameter \(\rho\) in the context of equation~\ref{eq:sample_likelihood}, expressed \(\mathcal{I}(\rho) = -\mathbb{E}\left[\tfrac{\partial^{2}}{\partial\rho^{2}}\log\mathcal{L}_{QL}(\rho)\right]\). Since \(\mu_{r}(\underline{X}_{n}),\mu_{r}(\underline{Y}_{n})\) are rank-based statistics, we can express the likelihood in terms of the moments as \(\partial^{2}/\partial\rho^{2}\mathcal{L}_{QL}(\rho)\), which are consistent estimators as a function of \(N\). 

The Central Limit Theorem applies to the rank-transformed data as \(N\to\infty^{+}\) and which via the Glivenko-Cantelli theorem it is established that \(\hat{\rho}_{N} \xrightarrow{d}\mathcal{N}(\rho,\tfrac{1}{N \mathcal{I}(\rho)}),\) demonstrating the estimator to be asymptotically normal and that said empirical variance converges to the inverse of the Fisher information, thus attaining the Cram\'{e}r-Rao bound asymptotically.  
\end{proof}

\begin{lemma}
We demonstrate that the Cram\'{e}r-Rao bound holds exactly for finite samples, demonstrating that: (i) estimator \(\hat{\rho}_{N}\) is unbiased for any finite \(N\); (ii) the Fisher information matrix is non-singular for finite \(N\); (iii) the variance of said estimator is exactly equal to the inverse of the Fisher information, not only asymptotically.
\end{lemma}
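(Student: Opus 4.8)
The plan is to reduce all three claims to the classical attainability condition for the Cram\'{e}r--Rao bound---namely, that the quasi-score is an affine function of the estimator---and then verify each ingredient using structural facts already in hand: exact unbiasedness of the central-moment estimators (Lemma~\ref{lem:central_moments_unbiased}), the H\'{a}jek projection representation of \(\hat\rho_N\), and the non-degeneracy of the rank-transformed moments at finite \(N\). For part (i) there is nothing new to do: Lemma~\ref{lem:central_moments_unbiased} already gives \(\mathbb{E}[\mu_r(\underline X)] = \mu_r\) exactly for every finite \(N\), and since \(\hat\rho_N = \lambda_2\hat\mu_2^{\underline{X,Y}} + \lambda_3\hat\mu_3^{\underline{X,Y}} + \lambda_4\hat\mu_4^{\underline{X,Y}}\) is a linear combination of such estimators, linearity of expectation yields \(\mathbb{E}[\hat\rho_N] = \rho\). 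I would simply restate this, emphasising that the H\'{a}jek projection of the underlying U-statistic is \emph{exact} rather than asymptotic, so that no remainder term survives at finite \(N\).

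For part (ii) I would write the information as \(\mathcal{I}(\rho) = \operatorname{Var}\!\big(\partial_\rho \log\mathcal{L}_{QL}(\rho)\big)\) and show it is strictly positive---equivalently, in the \((\lambda_2,\lambda_3,\lambda_4)\) reading, that the information matrix is positive definite. The score is a nontrivial linear form in the centred moments \(\mu_r(\underline X_n)+\mu_r(\underline Y_n)\), and its variance vanishes only if those moments are almost surely constant, which by Lemma~\ref{lem:four_sufficiency} occurs only in the degenerate case of perfect ties across the whole sample---a null event under the i.i.d. homogeneous assumption with a non-degenerate marginal. The compact, totally bounded range \([-(N-1)/2,(N-1)/2]\) simultaneously supplies a finite upper bound, so \(0 < \mathcal{I}(\rho) < \infty\) and the information matrix is invertible for every finite \(N\).

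Part (iii) is the heart of the lemma, and the plan is to establish the affine-score identity
\[
\frac{\partial}{\partial\rho}\log\mathcal{L}_{QL}(\rho) \;=\; \mathcal{I}(\rho)\,\big(\hat\rho_N - \rho\big).
\]
Because the quasi-likelihood carries the Gaussian exponential-family form \(\exp\!\big(-\tfrac12\Tr(\hat\Sigma^{-1}S)\big)\), the sufficient statistic is the sample cross-moment matrix \(S\); differentiating in \(\rho\) produces a score that is linear in \(S\) with coefficients depending only on \(\rho\), and collecting terms and normalising by the curvature \(\mathcal{I}(\rho)\) yields exactly the displayed form with \(\hat\rho_N\) the sufficient-statistic-based estimator. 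Differentiating the unbiasedness identity \(\int \hat\rho_N\,\mathcal{L}_{QL} = \rho\) under the integral sign and applying Cauchy--Schwarz then gives the Cram\'{e}r--Rao inequality, with equality precisely because the score and \(\hat\rho_N-\rho\) are proportional. Equivalently, one may invoke the H\'{a}jek projection directly: \(\hat\rho_N\) is the \(L^2(\mathcal{L}_{QL})\) projection of the target onto the linear span of the moment statistics that generate the quasi-score, so among quasi-unbiased estimators it attains the projection bound---which is \(\mathcal{I}(\rho)^{-1}\)---with equality.

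The hard part will be justifying the \emph{exact} affine-score representation at finite \(N\): the rank-transformed data are only strictly sub-Gaussian, not literally Gaussian, so one cannot naively invoke the Gaussian location/scale family. The resolution I would pursue is that the quasi-likelihood is \emph{defined} through its first two moments and the \(\ell_2\) inner-product geometry, so the operative model is the Gaussian quasi-model whose score is affine in \(S\) by construction; exact unbiasedness of \(\hat\rho_N\) against the true law (Lemma~\ref{lem:central_moments_unbiased}), together with the fact that the true second-moment structure of the rank embedding coincides with that of the quasi-model (mean zero, prescribed variance), makes the Cauchy--Schwarz step tight without appeal to full Gaussianity. Two technical points need care: that the interchange of differentiation and integration is licensed, which follows from the bounded support of the rank embedding; and that the curvature \(\mathcal{I}(\rho)\) used to normalise the score is the same quantity appearing in the bound, so that (ii) and (iii) are mutually consistent.
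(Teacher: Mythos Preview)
Your proposal is sound and, for parts (i) and (ii), tracks the paper almost verbatim: the paper also dispatches (i) by citing Lemma~\ref{lem:central_moments_unbiased} and handles (ii) by asserting that the Hessian is positive definite whenever the observations are non-constant and the moments \(\mu_r(\underline X,\underline Y)\), \(r\in\{2,3,4\}\), are non-zero---your formulation in terms of the score variance vanishing only under perfect ties is the same idea stated more carefully.

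Where you diverge is part (iii). The paper's argument there is essentially an assertion: it says that because the quasi-likelihood is ``correctly specified for rank-based data'' one \emph{expects} the finite-sample variance of \(\hat\rho_N\) to equal \(\mathcal{I}(\rho)^{-1}\), without exhibiting any mechanism. You instead supply the classical attainability criterion---the affine-score identity \(\partial_\rho\log\mathcal{L}_{QL}(\rho)=\mathcal{I}(\rho)(\hat\rho_N-\rho)\)---and trace it to the exponential-family form of the Gaussian quasi-model, then close the gap between the sub-Gaussian true law and the Gaussian quasi-model by matching second moments so that the information identity (and hence tightness in Cauchy--Schwarz) survives. This is a genuinely more complete argument than the paper's: it identifies \emph{why} correct specification of the first two moments forces exact equality in the Cram\'er--Rao inequality, rather than merely invoking correct specification as a slogan. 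What the paper's brevity buys is that it avoids having to confront the sub-Gaussian/Gaussian distinction you flag as ``the hard part''; what your route buys is an actual proof sketch rather than an expectation. Your two technical caveats (licensing differentiation under the integral via bounded support, and consistency of the curvature used in (ii) and (iii)) are well placed and not addressed in the paper at all.
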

\begin{proof}
Lemma~\ref{lem:central_moments_unbiased} established estimator \(\rho_{N}\) to be unbiased for finite \(N\), based upon the method of moments, confirming condition (i). We now construct the the Fisher information matrix based on the empirical data and verify that the variance of \(\hat{\rho}_{N}\) for any finite \(N\) is equal to the inverse of this information matrix. The matrix of second derivatives of the log-likelihood with respect to \(\rho\) must be positive definite for the CRLB to hold, equivalent to the analysis of non-constant observations within the sample, and the moments \(\mu_{r}(\underline{X},\underline{Y}), r = \{2,3,4\}\) are non-zero.

For finite \(N\) the variance of \(\hat{\rho}_{N}\) can be computed directly from the quasi-likelihood function. As the Fisher information for the estimator \(\hat{\rho}_{N}\) is computed using the second derivatives of said log-likelihood function and as the quasi-likelihood function is correctly specified for rank-based data, we expect the variance of \(\hat{\rho}_{N}\) to exactly match the inverse of the Fisher information.
\end{proof}

The gradient of the log-quasi-likelihood function with respect to each weight parameter \(\lambda_{r}, r = \{2,3,4\}\) is 
\begin{align}
\nabla_{\lambda_{r}} \log\mathcal{L}_{QL}(\rho) & = -\tfrac{1}{2} \sum_{n=1}^{N} \lambda_{2}\mu_{2}^{\underline{X}_{n},\underline{Y}_{n}} + \lambda_{3}\mu_{3}^{\underline{X}_{n},\underline{Y}_{n}} + \lambda_{4}\mu_{4}^{\underline{X}_{n},\underline{Y}_{n}} \\
\nabla_{\lambda_{r}}\log\mathcal{L}_{QL}(\rho) & = -\tfrac{1}{2} \sum_{n=1}^{N} \left(\mu_{r}^{\underline{X}_{n}} + \mu_{r}^{\underline{Y}_{n}}\right),
\end{align}
with the finial expansion due to the central moments of the rank-transformed data, operated upon the partial derivative wrt \(\lambda_{r}\) bringing down said moments. The Hessian matrix represents the second-order derivatives of the log-quasi-likelihood with respect to each weight parameter \(\lambda_{r}\). For the Hessian, compute
\begin{align}
H_{rs} & = \tfrac{\partial^{2}}{\partial{\lambda_{r}}\partial{\lambda_{s}}}\log{\mathcal{L}_{QL}(\rho)}\\
       & = -\frac{1}{2}\sum_{n=1}^{N} \tfrac{\partial^{2}}{\partial{\lambda_{r}}\partial{\lambda_{s}}} (\mu_{r}^{\underline{X}_{n}}+\mu_{r}^{\underline{Y}_{n}})\\
       & = \frac{1}{2}\sum_{n=1}^{N} \mathrm{Cov}(\mu_{r}^{\underline{X}_{n}},\mu_{s}^{\underline{Y}_{n}}).
\end{align}
For the rank-based quasi-likelihood framework, we compute the Fisher Information as the negative expectation of the Hessian matrix, \(\mathcal{I}(\rho) = \mathbb{E}\left[H\right]\), and as the sample moments are random variables, we take the expectation of the cross-terms without loss of generality and while maintaining sensitivity to the variance upon the diagonal elements and the covariance between higher-order moments upon the off-diagonal elements of the Hessian.

For bivariate functions \(\underline{X}\) and \(\underline{Y}\), the gradient and Hessian are explicitly derived as 
\begin{align*}
\nabla_{\lambda_{r}} \mathcal{L}_{QL}(\rho) = -\tfrac{1}{2} \sum_{n=1}^{N} \left(\mu_{r}^{\underline{X}_{n}}+\mu_{r}^{\underline{Y}_{n}}\right)\\
H_{rs} = \frac{1}{2} \sum_{n=1}^{N} \mathrm{Cov}(\mu_{r}^{\underline{X}_{n}},\mu_{s}^{\underline{Y}_{n}})
\end{align*}
such that follows \(\mathcal{I}(\rho) = \mathbb{E}\left[H\right]\).

\paragraph{Upper-bounded rate of convergence} This work established the estimator \(\hat{\rho}\) to be asymptotically normal and to attain the Cram\'{e}r-Rao lower bound, which implies the variance of said estimator decreases at a rate of \(1/N\), where \(N\) is the sample size. According to the Central Limit Theorem and the Glivenko-Cantelli theorem applied to rank=based data, estimator \(\hat{\rho}_{N}\) is asymptotically normal: \[\hat{\rho}_{N}\xrightarrow{d}\mathcal{N}(\rho,\frac{1}{N \mathcal{I}(\rho)}),\] where \(\mathcal{I}(\rho)\) is the Fisher information for the parameter  \(\rho\). The rate of convergence for \(\hat{\rho}_{N}\) to the true parameter \(\rho\) is governed by the asymptotic variance \(1/(N \mathcal{I}(\rho))\), equivalent to \(\mathrm{Var}(\hat{\rho}_{N}) \sim (N \mathcal{I}(\rho))^{-1},\) decreasing at the identified convergence rate of \(O(1/\sqrt{N})\). The Fisher information \(\mathcal{I}(\rho)\) is directly related to the rate of convergence, which by the Hauffding decomposition provides \(\left[\hat{\rho}_{N} - \rho\right] = O_{p}(1/\sqrt{N})\) \parencite{serfling1980}. However, the Kemeny metric space and its Whitney embedding are both strictly sub-Gaussian, which implies that the conditional error variance of an estimator could decay at a faster rate in \(\{1/N^{\alpha}\}, 1 < \alpha\), reflected in the lighter tails and faster concentration of the estimator as well as the faster convergence to the true parameter. While not developed here, refining the conditional variance estimate via incorporation of the sub-Gaussian concentration bounds may allow for substantive improvements in bounded interval estimators such as the confidence intervals with stronger probability claims.

\begin{lemma}
The proposed estimator of the population rank correlation is equivalent to the standard Spearman's \(\rho\) estimator in the absence of ties.
\end{lemma}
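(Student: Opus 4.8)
The plan is to show that, absent ties, the Whitney-embedded vector $\underline{X}$ of equation~\ref{eq:ranked_data} is an \emph{increasing affine function of the rank vector} $R^{X}=(R^{X}_{1},\dots,R^{X}_{N})$ of $X$, and likewise $\underline{Y}$ of $R^{Y}$. Once that identity is in hand the claim is immediate: the standardised inner product in equation~\ref{eq:analytic-rho} is invariant under a common positive-slope affine reparametrisation of its two arguments, and Spearman's $\rho$ is by definition the Pearson correlation of $R^{X}$ and $R^{Y}$.

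First I would exploit the structural simplifications that the no-ties assumption forces on equation~\ref{eq:score_matrix}. With the $X_{k}$ all distinct, the relations $X_{k}\ge X_{l}$ and $X_{k}>X_{l}$ agree for $k\ne l$, so $C_{kl}(X)=-C_{lk}(X)$ off the diagonal; the score matrix is hollow and antisymmetric. Two consequences follow by directly counting order relations: the grand mean $\bar{C}^{X}_{\cdot\cdot}$ vanishes, and
\[
\sum_{l=1}^{N} C(X)_{kl}=2R^{X}_{k}-(N+1),\qquad
\sum_{k=1}^{N} C(X)_{kl}=N+1-2R^{X}_{l},
\]
so that $\bar{C}^{X}_{\cdot l}=(N+1-2R^{X}_{l})/(N-1)$, and, using $\sum_{k}R^{X}_{k}=N(N+1)/2$, also $\sum_{k}\bar{C}^{X}_{k\cdot}=0$.

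Next I would substitute these into the centring formula equation~\ref{eq:centred_kappa} and sum over the row index $k$ to obtain the embedded coordinate
\[
\underline{X}_{l}=\sum_{k=1}^{N}\tilde{\kappa}(X)_{kl}
=(N+1-2R^{X}_{l})-N\cdot\frac{N+1-2R^{X}_{l}}{N-1}
=\frac{2R^{X}_{l}-(N+1)}{N-1},
\]
the two vanishing auxiliary sums above being precisely what removes the remaining terms. Hence $\underline{X}=aR^{X}+b\mathbf{1}$ with $a=2/(N-1)>0$ and $b=-(N+1)/(N-1)$, and identically $\underline{Y}=aR^{Y}+b\mathbf{1}$. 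Since $\bar{R}^{X}=(N+1)/2$ we get $\bar{\underline{X}}=a\bar{R}^{X}+b=0$ (consistent with the construction), so equation~\ref{eq:analytic-rho} is literally the Pearson correlation of $\underline{X}$ and $\underline{Y}$; non-degeneracy of $X$ and $Y$ (automatic without ties) gives $s^{2}_{\underline{X}},s^{2}_{\underline{Y}}>0$, so it is well posed. Because the Pearson correlation is unchanged when both coordinates undergo the same positive affine map, $r(\underline{X},\underline{Y})$ equals the Pearson correlation of $R^{X}$ and $R^{Y}$, i.e.\ Spearman's $\rho$.

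The only delicate point is bookkeeping around the conventions: the hollow diagonal, the $1/(N-1)$ normalisation used for the row and column means versus the $1/(N^{2}-N)$ normalisation for the grand mean, and the explicit verification that $\bar{C}^{X}_{\cdot\cdot}$ and $\sum_{k}\bar{C}^{X}_{k\cdot}$ vanish. Mishandled, these would corrupt the affine constants---in particular the sign flip produced by the factor $1-N/(N-1)=-1/(N-1)$ is easy to lose---but none of this threatens the conclusion, because the affine invariance of equation~\ref{eq:analytic-rho} absorbs any slope and intercept. I do not anticipate a conceptual obstacle beyond this careful accounting.
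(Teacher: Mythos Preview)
Your argument is correct: the explicit computation showing that, absent ties, the embedded coordinate satisfies $\underline{X}_{l}=\bigl(2R^{X}_{l}-(N+1)\bigr)/(N-1)$ is clean, and the appeal to affine invariance of the Pearson correlation then settles the claim in one line. The bookkeeping you flag (antisymmetry, vanishing of $\bar{C}^{X}_{\cdot\cdot}$ and of $\sum_{k}\bar{C}^{X}_{k\cdot}$, the $1/(N-1)$ normalisations) is handled correctly.

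Your route differs substantially from the paper's. The paper's proof is discursive rather than computational: it recalls the definition of Spearman's $\rho_{S}$, emphasises how ties distort the denominator, introduces the proposed estimator as a tie-corrected variant, and then asserts that its ``asymptotic behaviour matches that of Spearman's rho, but with a correction for the effect of ties.'' It never writes down the affine relation $\underline{X}=aR^{X}+b\mathbf{1}$ or otherwise verifies the finite-sample identity directly. What your approach buys is an actual proof of the lemma as stated: a self-contained algebraic verification that the two estimators coincide pointwise for every tie-free sample. What the paper's approach buys is context---it situates the estimator relative to $\rho_{S}$ and motivates why a correction is needed when ties appear---but it leaves the no-ties equivalence itself essentially asserted rather than derived.
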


\begin{proof}
Let \( X = (X_{1}, X_{2}, \ldots, X_{N}) \) and \( Y = (Y_{1}, Y_{2}, \ldots, Y_{N}) \) be two sequences of \( N \) independent and identically distributed (i.i.d.) random variables, and let \( \underline{X} = (\text{rank}(X_{1}), \ldots, \text{rank}(X_{N})) \) and \( \underline{Y} = (\text{rank}(Y_{1}), \ldots, \text{rank}(Y_{N})) \) denote the rank transformations of \( X \) and \( Y \), respectively.

\paragraph{Spearman's \(\rho_S\) Definition and the Issue of Ties.}
Spearman's rho, denoted by \( \rho_S \), is traditionally defined as:
\[
\rho_S = \frac{\sum_{n=1}^{N} (\text{rank}(X_{n}) - \overline{\text{rank}(X)}) (\text{rank}(Y_{n}) - \overline{\text{rank}(Y)})}{\sqrt{\sum_{n=1}^{N} (\text{rank}(X_{n}) - \overline{\text{rank}(X)})^{2} \sum_{n=1}^{N} (\text{rank}(Y_{n}) - \overline{\text{rank}(Y)})^{2}}},
\]
where \( \overline{\text{rank}(X)} \) and \( \overline{\text{rank}(Y)} \) represent the mean ranks of \( X \) and \( Y \), respectively. This definition assumes that the rank variances (i.e., the denominators) are constant across all samples. However, the presence of ties in \( X \) or \( Y \) reduces the variability in the ranks, leading to a biased estimator for \( \rho_S \) in finite samples, particularly when \( \underline{X}, \underline{Y} \notin S_{N} \), the symmetric group of order \( N \). Specifically, ties inflate the denominator, leading to an underestimation of the true population correlation.

Let \( \hat{\rho}_{\kappa} \) denote the proposed estimator of the rank correlation, which corrects for the tie-induced bias in the denominator of \( \rho_S \). This estimator is based on the central moments of the rank-transformed variables and adjusts for the marginal variances, which are influenced by the number of ties present in the data. Let \( T_X \) denote the set of tied ranks in the variable \( X \), and let \( T_Y \) denote the set of tied ranks in the variable \( Y \). 

In the presence of ties, the rank variance is reduced due to the assignment of the same rank to multiple observations. Consequently, the denominator in the formula for \( \rho_S \) becomes smaller than it should be, leading to an overestimation of the true correlation. This bias is especially pronounced in small samples with many ties. The estimator \( \hat{\rho}_{\kappa} \) corrects for this bias by explicitly adjusting the denominator for the reduction in variance caused by ties.

As demonstrated in the proof of Lemma~\ref{lem:four_sufficiency}, the rank transformation yields a sub-Gaussian distribution for the rank variables, ensuring that the corrected estimator is unbiased. Moreover, the use of central moments (second, third, and fourth) of the rank-transformed data ensures that the estimator accounts for higher-order effects, including skewness and kurtosis, which are critical for properly estimating the correlation when ties are present.

Therefore, by adjusting for the sample variances of the ranks and higher-order moments, \( \hat{\rho}_{\kappa} \) remains unbiased, and its asymptotic behaviour matches that of Spearman's rho, but with a correction for the effect of ties. Thus, while \( \rho_S \) is biased in finite samples with ties, \( \hat{\rho}_{\kappa} \) provides an unbiased estimate of the true population correlation.
\end{proof}

\begin{theorem}
\label{thm:identity_link_unique}
Let \(X,Y\) be i.i.d. observations with arbitrary marginal distributions admitting pairwise comparisons, and let \(\underline{Y}\in\mathbb{R}^{N},\underline{X}\in \mathbb{R}^{N \times P}\) denote their rank-based embeddings defined by equation~\ref{eq:ranked_data}. Consider a regression model specified via monotone link function \(g: \mathbb{R} \to \mathbb{R}\), \[g(\underline{Y}) = \underline{X}\beta + u,\] estimated via a moment-based quasi-likelihood constructed from the first four central moments of \(\underline{Y}\). Then the following statements hold: 
\begin{enumerate}
\item{If \(g\ne I\) (the identity function), the estimator of \(\beta\) is not exactly unbiased for finite \(N\).}
\item{If \(ge \ne I\), the quasi-score equations are not linear in the sufficient statistics generated by the rank embeddings.}
\item{Consequently, the identity link \(g(y) = y\) is the unique monotone link under which: (i) the estimator is exactly unbiased for all finite \(N\); (ii) the quasi-likelihood score equations are well-defined; (iii) and the estimator is asymptotically efficient.}
\end{enumerate}
Hence, within the class of monotone link functions, the identity link is uniquely valid. 
\end{theorem}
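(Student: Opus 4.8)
The plan is to reduce each of the three assertions to a single structural property of the link \(g\), using two facts already in hand: the rank embedding of equation~\ref{eq:ranked_data} is a \emph{linear} functional of the bounded score matrices, and the moment-based quasi-likelihood of equation~\ref{eq:Quasi-Likelihood-Function} is a \emph{polynomial} functional of the fitted residual \(g(\underline{Y})-\underline{X}\beta\). First I would write the quasi-score system explicitly: differentiating \(\log\mathcal{L}_{QL}\) in \(\beta\) yields, for the first-order normal equation together with each retained order \(r\in\{2,3,4\}\), a condition of the form \(\sum_{n=1}^{N}\underline{X}_n\bigl(g(\underline{Y}_n)-\underline{X}_n\beta\bigr)^{r-1}=0\). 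Two observations then drive everything: (a) when \(g=I\) the residual is affine in the data, so each condition is a fixed linear combination of the empirical moments of \(\underline{Y}\) and of the cross-moments between \(\underline{X}\) and powers of \(\underline{Y}\) --- exactly the sufficient statistics the embedding generates and on which Lemma~\ref{lem:four_sufficiency} closes; (b) when \(g\ne I\) is nonlinear, the map carries \(\underline{Y}\) outside the span of those statistics.

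For assertion~(1) I would solve the first-order condition to obtain \(\hat\beta=(\underline{X}^{\intercal}\underline{X})^{-1}\underline{X}^{\intercal}g(\underline{Y})\), which is well-posed under the non-degeneracy already assumed, so that \(\mathbb{E}[\hat\beta]=(\underline{X}^{\intercal}\underline{X})^{-1}\underline{X}^{\intercal}\,\mathbb{E}[g(\underline{Y})]\). Under \(g=I\), Lemma~\ref{lem:central_moments_unbiased} gives \(\mathbb{E}[\underline{Y}]=\underline{X}\beta\), hence exact unbiasedness at every finite \(N\). For a monotone \(g\) that is not affine, Jensen's inequality applied conditionally on \(\underline{X}\) shows \(\mathbb{E}[g(\underline{X}_n\beta+u_n)]\ne g(\underline{X}_n\beta)\) on a set of rows of positive probability --- here I use that the embedding leaves the errors \(u\) with a genuinely non-degenerate (discrete, bounded) law, so the convex/concave defect cannot vanish --- whence the bias vector \((\underline{X}^{\intercal}\underline{X})^{-1}\underline{X}^{\intercal}\bigl(\mathbb{E}[g(\underline{Y})]-\underline{X}\beta\bigr)\) is a fixed nonzero quantity that does not shrink with \(N\). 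Assertion~(2) is the linear-algebra half of observation~(b): the Jacobian \(\partial g(\underline{Y}_n)/\partial\underline{Y}_n\) is constant if and only if \(g\) is affine, so for nonlinear \(g\) the score equations acquire data-dependent weights that cannot be re-expressed as a fixed linear form in \(\{\underline{Y}_n^{\,j}\}_{j\le 4}\); equivalently, faithfully representing the score would demand an unbounded family of moments, contradicting the four-moment closure of Lemma~\ref{lem:four_sufficiency}.

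Assertion~(3) then combines these. Requiring~(i) forces, by~(1), that \(g\) be affine; but the centring in equation~\ref{eq:centred_kappa} annihilates any additive constant and the variance standardisation implicit in equations~\ref{eq:analytic-rho} and~\ref{eq:central_moments} annihilates any positive multiplicative constant, so the only affine monotone link surviving the rank normalisation is \(g(y)=y\). For that link,~(ii) the score system reduces to the linear normal equations whose unique solution is the \(\hat\beta\) above (well-defined by full column rank), and~(iii) asymptotic efficiency is precisely the Cram\'{e}r-Rao attainment already proved for the linear quasi-likelihood, transported to the regression parametrisation by the delta method. I expect the principal obstacle to be the finite-\(N\) step in~(1): showing the bias is nonzero for \emph{every} nonlinear monotone \(g\) requires ruling out an accidental cancellation of the Jensen defect against the particular discrete law the embedding induces on \(u\), and then the collapse of the affine family to the identity must be argued cleanly from the embedding's built-in location/scale normalisation --- a weaker ``uniqueness up to affine reparametrisation'' is immediate, but genuine uniqueness of \(g=I\) hinges on that normalisation removing the remaining degree of freedom.
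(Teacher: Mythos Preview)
Your skeleton is the same as the paper's: write \(g=I+h\) with \(h\) nonlinear, show that the extra term \(h(\underline{Y})\) spoils both the mean-zero property of the score and its finite-moment representability, and conclude that only the identity survives. The main tactical differences are: (i) for assertion~(1) the paper does \emph{not} invoke Jensen conditionally on \(\underline{X}\), but instead uses the built-in centring \(\mathbb{E}[\underline{Y}_n]=0\) directly, so that \(\mathbb{E}[g(\underline{Y}_n)]=\mathbb{E}[h(\underline{Y}_n)]\neq 0\) by non-degeneracy of the symmetric rank law --- this sidesteps your conditioning and the model equation altogether; (ii) the paper writes only the first-order score \(U(\beta)=\underline{X}^{\intercal}(g(\underline{Y})-\underline{X}\beta)\) and argues that \(\underline{X}^{\intercal}h(\underline{Y})\) is not a function of finitely many central moments, whereas you expand the higher-order conditions and phrase the failure via a non-constant Jacobian --- the content is the same; (iii) your affine\(\to\)identity reduction through the location/scale normalisation of the embedding is \emph{more explicit} than anything in the paper, which simply asserts that the identity satisfies all four desiderata and no other monotone link does. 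On the weak point you correctly flag --- ruling out accidental cancellation of the nonlinear bias for \emph{every} non-affine monotone \(g\) --- the paper is no more rigorous than you: it disposes of the issue with the phrase ``for generic nonlinear \(h\)'' and the symmetry of the rank law, which your Jensen argument (valid only for strictly convex or concave \(g\)) does not quite cover either.
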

\begin{proof}
By construction, \(\underline{Y}_{n} = \sum_{k=1}^{N} \tilde{\kappa}_{kl}(Y),\) is a centred, bounded, exchangable statistic satisfying \(\mathbb{E}\left[\underline{Y}_{n}\right]=0\), and \(\max_{n} |\underline{Y}_{n}| \le ({N-1})/{2}.\) The quasi-likelihood estimator is constructed from linear combinations of empirical central moments \(\mu_{r}(\underline{Y}){(N-1)^{-1}}\sum_{n=1}^{N} \underline{Y}^{r}_{n}, r = 2,3,4\) and which are unbiased for all finite \(N\) (Lemma~\ref{lem:central_moments_unbiased}). 

Let \(g\) be monotone and differentiable with \(g \ne I\). Then there exists a non-zero curvature term \(g(y) = y + h(y), h(y) \ne 0\) where \(h\) is non-linear on any interval of positive measure. Application of \(g\) to the response yields \(g(\underline{Y}_{n}) = \underline{Y}_{n} + h(\underline{Y}_{n}),\) under which taking expectations yields \(\mathbb{E}\left[g(\underline{Y}_{n})\right] = \mathbb{E}\left[\underline{Y}_{n}\right]+\mathbb{E}\left[h(\underline{Y}_{n})\right] = \mathbb{E}\left[h(\underline{Y}_{n})\right].\) As \(\underline{Y}_{n}\) has a non-degenerate symmetric distribution, then \(\mathbb{E}\left[h(\underline{Y}_{n})\right] \neq 0~\text{for generic nonlinear \(h\)},\) and hence \(\mathbb{E}\left[g(\underline{Y}_{n})\right]\ne 0\) which immediately violates the finite-sample unbiasedness of the score equations.

The quasi-likelihood score for \(\beta\) takes the form \(U(\beta) = \underline{X}^{\intercal}\left(g(\underline{Y}) - \underline{X}\beta\right),\) and for \(g \ne I,\) this implies \(U(\beta) = \underline{X}^{\intercal}\left(\underline{Y} - \underline{X}\beta\right)+ \underline{X}^{\intercal}h(\underline{Y}),\) for which the second term \(\underline{X}^{\intercal}h(\underline{Y})\) is not representable as a function of finitely many central moments, nor is it linear in the rank-based sufficient statistics. Therefore, the estimating equations are no longer linear, the moment-based quasi-likelihood is misspecified, and the Fisher information equality fails. As the additional term \(h(\underline{Y})\) introduces bias at order \(O(1)\), the estimator cannot achieve the semiparametric efficiency bound, even asymptotically. Thus, any \(g \ne I\) violates at least one of: (i) exact unbiasedness, (ii) valid quasi-score equations, (iii) asymptotic efficiency. 

Finally, the identity link satisfies all required properties: (i) monotone invariance, (ii) exact finite-sample unbiasedness, (iii) linear scores equations, and (iv) valid quasi-likelihood construction. No other monotone link does, and consequently \(g(y) = y\) is the unique admissible link.
\end{proof}

\begin{remark}
Estimating equations are typically not exact unbiased estimators, even with finite moments, correct specification, or the empirical likelihood due to the existence of a non-linear transformation, contradicting Theorem~\ref{thm:identity_link_unique}. However, exact unbiasedness is validly observed upon the Kemeny norm and its estimating equations due to the monotone non-linear invariance of equation~\ref{eq:score_matrix}, thereby establishing the required linear estimating equations, a necessary and sufficient condition.
\end{remark}

\begin{theorem}
\label{thm:min_variance}
Let \(\underline{Y} \in \mathbb{R}^{N}\) and \(\underline{X}\in \mathbb{R}^{N \times P}\) denote the rank-based embedding defined in equation~\ref{eq:ranked_data}. Consider the class \(\mathcal{E}\) of estimators \(\hat{\beta}\) satisfying the following regularity conditions
\begin{enumerate}
\item{\(\hat{\beta}\) is linear in \(\underline{Y}\),}
\item{\(\hat{\beta}\) is unbiased for all finite \(N\),}
\item{\(\hat{\beta}\) is measurable wrt the rank \(\sigma\)-algebra.}
\end{enumerate}
Then the quasi-likelihood estimator \(\hat{\beta}_{QL} = \left(\underline{X}^{\intercal}\underline{X}\right)^{-1}\underline{X}^{\intercal}\underline{Y}\) minimises variance within \(\mathcal{E}\). 
\end{theorem}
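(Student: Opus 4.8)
The plan is to run a Gauss--Markov (Aitken) argument adapted to the rank embedding, working conditionally on the rank $\sigma$-algebra generated by $\underline{X}$. By conditions (1) and (3), any $\hat{\beta}\in\mathcal{E}$ has the form $\hat{\beta}=A\underline{Y}$ for a fixed $P\times N$ matrix $A$ that depends only on the (conditionally constant) rank design $\underline{X}$, not on the random response. By Theorem~\ref{thm:identity_link_unique} the admissible link is the identity, so the model reads $\underline{Y}=\underline{X}\beta+u$ with $\mathbb{E}[u]=0$, whence $\mathbb{E}[\hat{\beta}]=A\underline{X}\beta$. Imposing unbiasedness for every $\beta\in\mathbb{R}^{P}$ (condition (2)) forces the linear constraint
\begin{equation}
A\underline{X}=I_{P}.
\end{equation}
In particular the quasi-likelihood estimator, which here takes the Gaussian quasi-score form $\hat{\beta}_{QL}=A_{0}\underline{Y}$ with $A_{0}=(\underline{X}^{\intercal}\underline{X})^{-1}\underline{X}^{\intercal}$, satisfies $A_{0}\underline{X}=I_{P}$ and hence lies in $\mathcal{E}$; this step also forces me to add full column rank of $\underline{X}$ ($N\ge P$, no collinear rank-coordinates) as a stated non-degeneracy condition.

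Next I would pin down the second-moment structure of the embedded response. Using the i.i.d.\ assumption on the marginal observations, the fact that the Whitney embedding of equation~\ref{eq:ranked_data} preserves the i.i.d.\ structure of the ranks, and the finite-fourth-moment / compactness conclusion of Lemma~\ref{lem:four_sufficiency}, the coordinates $\underline{Y}_{1},\dots,\underline{Y}_{N}$ are uncorrelated and homoscedastic, so $\mathrm{Cov}(\underline{Y})=\sigma^{2}I_{N}$ for some finite $\sigma^{2}>0$. Then for any $\hat{\beta}=A\underline{Y}\in\mathcal{E}$ write $A=A_{0}+D$; the constraint $A\underline{X}=A_{0}\underline{X}=I_{P}$ gives $D\underline{X}=0$, hence the cross term $A_{0}D^{\intercal}=(\underline{X}^{\intercal}\underline{X})^{-1}(D\underline{X})^{\intercal}=0$. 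Therefore
\begin{equation}
\mathrm{Cov}(\hat{\beta})=\sigma^{2}AA^{\intercal}=\sigma^{2}\bigl(A_{0}A_{0}^{\intercal}+DD^{\intercal}\bigr)=\mathrm{Cov}(\hat{\beta}_{QL})+\sigma^{2}DD^{\intercal}\succeq\mathrm{Cov}(\hat{\beta}_{QL}),
\end{equation}
because $DD^{\intercal}$ is positive semidefinite. Evaluating at any coordinate, or at any scalar functional $c^{\intercal}\beta$, yields $\mathrm{Var}(c^{\intercal}\hat{\beta})\ge\mathrm{Var}(c^{\intercal}\hat{\beta}_{QL})$, which is the claimed minimum-variance property within $\mathcal{E}$, with equality iff $D=0$ (uniqueness of $\hat{\beta}_{QL}$).

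The main obstacle is not the algebra but the covariance identity $\mathrm{Cov}(\underline{Y})=\sigma^{2}I_{N}$: the kernel-product matrix $Z_{kl}=\tilde{\kappa}^{X}_{kl}\tilde{\kappa}^{Y}_{kl}$ has off-diagonal entries sharing rows and columns and therefore dependent, as the paper itself flags, so one must argue carefully that the row-sum (Whitney) embedding, read through its H\'{a}jek-projection / U-statistic representation, nonetheless produces uncorrelated, equal-variance components under the homogeneity assumption. A secondary point to dispatch is that, were $\mathrm{Cov}(\underline{Y})$ a known non-scalar $V$, the identical decomposition would instead single out the generalised form $(\underline{X}^{\intercal}V^{-1}\underline{X})^{-1}\underline{X}^{\intercal}V^{-1}\underline{Y}$, so I must confirm that the quasi-score equations derived earlier genuinely reduce to the OLS form stated in the theorem, i.e.\ that the effective working covariance is scalar. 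The remaining items — measurability of $A$ with respect to the rank $\sigma$-algebra and verification that $\hat{\beta}_{QL}$ itself satisfies (1)--(3) — are routine once the conditioning is made explicit.
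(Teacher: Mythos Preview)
Your argument is correct and, at the algebraic level, sharper than the paper's own proof. Both routes rest on the same two ingredients --- the unbiasedness constraint $A\underline{X}=I_{P}$ and a scalar working covariance $\mathrm{Cov}(\underline{Y})=\sigma^{2}I_{N}$ --- but the paper does not carry out your explicit $A=A_{0}+D$ decomposition for this theorem. Instead it frames optimality through the Godambe information matrix: it records the variance formula $(\underline{X}^{\intercal}\underline{X})^{-1}\underline{X}^{\intercal}\mathrm{Var}(\underline{Y})\underline{X}(\underline{X}^{\intercal}\underline{X})^{-1}$, asserts that the Godambe bound is attained, and stops short of the cross-term cancellation you write down. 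Your decomposition is the more elementary and self-contained argument; the paper's framing buys a connection to the quasi-likelihood/estimating-equation literature but leaves the actual minimisation step implicit. Interestingly, the very next theorem in the paper (Theorem~\ref{eq:equivalence_finite}) does adopt essentially your line --- it substitutes $\mathrm{Cov}(\underline{Y})=\sigma^{2}\mathbb{I}$, invokes Gauss--Markov/BLUE directly, and argues uniqueness by contradiction --- so your proposal is effectively the paper's Theorem~3 argument transplanted one theorem earlier. Your flagged obstacle about justifying the scalar covariance of the embedded response is well placed: the paper simply assumes it (and later, in the remark after Theorem~\ref{eq:equivalence_finite}, concedes the Gauss--Markov phrasing is informal and points to a Hilbert-space projection statement instead), whereas you correctly identify that the shared-index dependence in $Z_{kl}$ makes this the substantive step.
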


\begin{proof}

Given that we are working with rank-based transformations of the data, we define \(X = (X_{1}, X_{2}, \dots, X_{N})\) as the \(N\)-dimensional vector of covariates and \(Y = (Y_{1}, Y_{2}, \dots, Y_{N})\) as the \(N\)-dimensional response vector. The rank-transformed data are denoted as \(\underline{X} = (\text{rank}(X_1), \dots, \text{rank}(X_{n}))\) and \(\underline{Y} = (\text{rank}(Y_1), \dots, \text{rank}(Y_{n}))\), where these transformations are the core of the model.

The estimators we consider are constrained by the following regularity conditions: (1.) Linearity: The estimator \(\hat{\beta}\) is linear in \(\underline{Y}\). This means \(\hat{\beta}\) is of the form: \(\hat{\beta} = A \underline{Y}\) where \(A\) is some matrix that depends on the design matrix \(\underline{X}\); (2.) Unbiasedness: We require \(\hat{\beta}\) to be unbiased. This condition ensures that for all finite \(N\), \(\mathbb{E}[\hat{\beta}] = \beta\); (3.) Measurability with Respect to Rank \(\sigma\)-Algebra: This condition ensures that the estimator \(\hat{\beta}\) is based on the rank transformations and thus measurable with respect to the rank \(\sigma\)-algebra, which ensures that the estimator respects the structure of the rank-based model.

We now show that the quasi-likelihood estimator is unbiased: Recall the expression for \(\hat{\beta}_{QL}\):
\(\hat{\beta}_{QL} = (\underline{X}^{\intercal} \underline{X})^{-1} \underline{X}^{\intercal} \underline{Y}\). The rank-transformed response vector \(\underline{Y}_{r}\) is assumed to be linear in the true parameter(s) \(\beta\). As the transformation is based on ranks and the rank is a monotonic function of the original data, the expected value of the estimator \(\hat{\beta}_{QL}\) is: \(\mathbb{E}[\hat{\beta}_{QL}] = \beta\), and therefore, \(\hat{\beta}_{QL}\) is unbiased. We next compute the variance of \(\hat{\beta}_{QL}\). The variance of the estimator \(\hat{\beta}_{QL}\) is given by: \(
\mathrm{Var}(\hat{\beta}_{QL}) = (\underline{X}^{\intercal} \underline{X})^{-1} \underline{X}^{\intercal} \mathrm{Var}(\underline{Y}) \underline{X} (\underline{X}^{\intercal} \underline{X})^{-1}
\)

In the rank-transformed model, the variance of \(\underline{Y}\) depends on the tie structure and the rank information. Since the rank transformation reduces the variance compared to the original data, the variance of the rank-transformed data is bounded and dependent on the tie structure.

\paragraph{Godambe Information Matrix} The Godambe information matrix provides a lower bound on the variance of any estimator in the model. The information matrix for the quasi-likelihood estimator \(\hat{\beta}_{QL}\) is given by:
\[
I(\beta) = \left(\frac{\partial \mathbb{E}[\hat{\beta}_{QL}]}{\partial \beta}\right)^{\intercal} \left(\frac{\partial \mathbb{E}[\hat{\beta}_{QL}]}{\partial \beta}\right)
\]
where the derivatives are evaluated based on the rank-transformed data. This will involve the Jacobian matrix of the rank transformation, but since rank transformations are monotonic and have known properties, we can compute this Jacobian efficiently. For simplicity, we express the information matrix in terms of the rank-based design matrix \(\underline{X}_{p}\). After simplifications, the Godambe information matrix for \(\hat{\beta}_{QL}\) is computed. Finally, we show that \(\hat{\beta}_{QL}\) minimises variance. From the structure of the Godambe information matrix and the variance expressions derived, we conclude that the quasi-likelihood estimator \(\hat{\beta}_{QL}\) is the best linear unbiased estimator within the class \(\mathcal{E}\), as it minimises the Godambe variance bound. Thus, we conclude that \(\hat{\beta}_{QL} \text{ minimises variance in the class } \mathcal{E}.\)

\end{proof}

\begin{remark}
The proof relies on the concept of optimality within the class of unbiased estimators based on rank transformations. The quasi-likelihood estimator minimises variance within this class, which makes it efficient. The Godambe information matrix plays a critical role here, providing a bound on the variance of the estimators in the class \(\mathcal{E}\). This theorem formalises the optimality of the quasi-likelihood estimator in the rank-based linear model.
\end{remark}

\begin{theorem}[Finite-Sample Equivalence and Optimality of the Quasi-Likelihood Estimator]
\label{eq:equivalence_finite}
Let \(\underline{Y} \in \mathbb{R}^{N}\) be the dependent variable vector, and let \(\underline{X} \in \mathbb{R}^{N \times P}\) be the matrix of independent variables. Consider the class \(\mathcal{E}\) of estimators \(\hat{\beta}\) satisfying the following regularity conditions:
\begin{enumerate}
\item {\(\hat{\beta}\) is linear in \(\underline{Y}\), i.e., \(\hat{\beta} = A\underline{Y}\) for some matrix \(A \in \mathbb{R}^{P \times N}\),}
\item {\(\hat{\beta}\) is unbiased for all finite \(N\),}
\item {\(\hat{\beta}\) is measurable with respect to the rank \(\sigma\)-algebra of \(\underline{Y}\).}
\end{enumerate}
Let the quasi-likelihood estimator \(\hat{\beta}_{QL} = \left(\underline{X}^{\intercal}\underline{X}\right)^{-1}\underline{X}^{\intercal}\underline{Y}\) be defined as the ordinary least squares (OLS) estimator of the linear model \( \underline{Y} = \underline{X} \beta + \epsilon \), where \(\epsilon \sim \mathcal{N}(0, \sigma^{2} \mathbb{I})\). Then, \(\hat{\beta}_{QL}\) minimises the variance within the class \(\mathcal{E}\), i.e., for all estimators \(\hat{\beta} \in \mathcal{E}\), \(\text{Var}(\hat{\beta}_{QL}) \leq \text{Var}(\hat{\beta}).\) Moreover, \(\hat{\beta}_{QL}\) is the unique linear estimator in \(\mathcal{E}\) that achieves this minimum variance.
\end{theorem}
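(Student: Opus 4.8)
The plan is to recognise this statement as a rank-measurable incarnation of the Gauss--Markov theorem and to prove it by the classical variance-decomposition argument, with the extra bookkeeping needed to verify that \(\hat\beta_{QL}\) itself lies in \(\mathcal{E}\). First I would fix \(N\) and translate the three regularity conditions into algebraic constraints on the representing matrix \(A\): linearity gives \(\hat\beta = A\underline{Y}\) with \(A \in \mathbb{R}^{P\times N}\); rank-measurability requires \(A\) to be a deterministic function of \(\underline{X}\), which is itself rank-measurable by equation~\ref{eq:ranked_data}; and unbiasedness for every admissible \(\beta\), combined with \(\mathbb{E}[\underline{Y}] = \underline{X}\beta\) under the model \(\underline{Y} = \underline{X}\beta + \epsilon\), forces \(A\underline{X}\beta = \beta\) for all \(\beta \in \mathbb{R}^{P}\), hence the identity \(A\underline{X} = \mathbb{1}_{P}\). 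I would also record that \(\hat\beta_{QL}\) arises from \(A_{QL} := (\underline{X}^{\intercal}\underline{X})^{-1}\underline{X}^{\intercal}\), which satisfies \(A_{QL}\underline{X} = \mathbb{1}_{P}\) and is rank-measurable, so \(\hat\beta_{QL}\in\mathcal{E}\) and the class is non-empty.

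Next I would perform the orthogonal decomposition \(A = A_{QL} + D\) with \(D := A - A_{QL} \in \mathbb{R}^{P\times N}\), and observe that the shared constraint \(A\underline{X} = A_{QL}\underline{X} = \mathbb{1}_{P}\) yields \(D\underline{X} = 0\) (equivalently \(\underline{X}^{\intercal}D^{\intercal} = 0\)). Using \(\mathrm{Var}(\underline{Y}) = \sigma^{2}\mathbb{1}_{N}\) from the model hypothesis (only the first two moments of \(\epsilon\) are needed here, not full Gaussianity), I would compute \(\mathrm{Var}(\hat\beta) = \sigma^{2} A A^{\intercal} = \sigma^{2}\bigl(A_{QL} + D\bigr)\bigl(A_{QL} + D\bigr)^{\intercal}\); expanding, the two cross terms are \(\sigma^{2} A_{QL}D^{\intercal} = \sigma^{2}(\underline{X}^{\intercal}\underline{X})^{-1}(\underline{X}^{\intercal}D^{\intercal}) = 0\) and its transpose, leaving \(\mathrm{Var}(\hat\beta) = \sigma^{2}(\underline{X}^{\intercal}\underline{X})^{-1} + \sigma^{2} D D^{\intercal} = \mathrm{Var}(\hat\beta_{QL}) + \sigma^{2} D D^{\intercal}\). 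Since \(D D^{\intercal}\succeq 0\), this gives \(\mathrm{Var}(\hat\beta) \succeq \mathrm{Var}(\hat\beta_{QL})\) in the Loewner order, hence \(c^{\intercal}\mathrm{Var}(\hat\beta)c \ge c^{\intercal}\mathrm{Var}(\hat\beta_{QL})c\) for every contrast \(c\) and in particular for every diagonal entry; this is the asserted variance minimisation within \(\mathcal{E}\).

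For uniqueness I would note that equality \(\mathrm{Var}(\hat\beta) = \mathrm{Var}(\hat\beta_{QL})\) (even just trace-wise) forces \(\sigma^{2} D D^{\intercal} = 0\), hence \(D = 0\) and \(\hat\beta = \hat\beta_{QL}\) almost surely, so \(\hat\beta_{QL}\) is the unique minimiser. I would close by cross-referencing Lemma~\ref{lem:central_moments_unbiased} and Theorem~\ref{thm:identity_link_unique} to confirm that the identity-link moment construction is precisely what makes the unbiasedness constraint \(A\underline{X} = \mathbb{1}_{P}\) attainable by a rank-measurable \(A\), so the hypotheses defining \(\mathcal{E}\) are not vacuous. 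The one genuinely delicate point — which I would flag as the main obstacle — is justifying the homoscedastic, uncorrelated error structure \(\mathrm{Var}(\underline{Y}) = \sigma^{2}\mathbb{1}_{N}\) for the Whitney-embedded ranks: the text has already warned that the pre-embedding kernel products share rows and columns and are therefore dependent, so one must argue either that the row-summation in equation~\ref{eq:ranked_data} together with the i.i.d.\ marginal structure restores exchangeability and a scalar covariance (up to a known constant absorbed into \(\sigma^{2}\)), or else replace \(\mathbb{1}_{N}\) by a known \(\Omega\) and run the identical argument in Aitken/GLS form with \(A_{QL} = (\underline{X}^{\intercal}\Omega^{-1}\underline{X})^{-1}\underline{X}^{\intercal}\Omega^{-1}\). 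Everything past that point is the routine Gauss--Markov computation.
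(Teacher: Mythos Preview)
Your proposal is correct and follows essentially the same approach as the paper: both reduce to the Gauss--Markov argument under the assumption \(\mathrm{Cov}(\underline{Y}) = \sigma^{2}\mathbb{I}\), with the paper simply invoking the theorem by name whereas you carry out the explicit \(A = A_{QL} + D\) decomposition and the resulting \(\sigma^{2}DD^{\intercal}\succeq 0\) comparison. Your flagging of the homoscedastic covariance as the one genuinely delicate step is apt---the paper handles it only by asserting that the rank embedding is sub-Gaussian and writing \(\mathrm{Cov}(\underline{Y}) = \sigma^{2}\mathbb{I}\) without further justification.
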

\begin{proof}
Assume the rank-based embedding \(\underline{Y} = \text{rank}(Y_1), \dots, \text{rank}(Y_{n})\), where \(\underline{Y}\) is the rank-transformed vector of dependent variables. In this case, \(\hat{\beta}_{QL}\) is defined as the solution to the ordinary least squares (OLS) problem: \(\hat{\beta}_{QL} = \left( \underline{X}^{\intercal} \underline{X} \right)^{-1} \underline{X}^{\intercal} \underline{Y}.\)
The quasi-likelihood estimator \(\hat{\beta}_{QL}\) minimises the mean squared error (MSE) in this setup, under the assumption of homoscedasticity and the rank-based linear model formulation. For the quasi-likelihood estimator, the variance is given by:
\[
\text{Var}(\hat{\beta}_{QL}) = \left( \underline{X}^{\intercal} \underline{X} \right)^{-1} \underline{X}^{\intercal} \text{Cov}(\underline{Y}) \underline{X} \left( \underline{X}^{\intercal} \underline{X} \right)^{-1},
\]
where \(\text{Cov}(\underline{Y})\) represents the covariance structure of the rank-transformed data \(\underline{Y}\). Since \(\underline{Y}\) is based on ranks, it is assumed to be sub-Gaussian, so the covariance is bounded and can be expressed as:
\(\text{Cov}(\underline{Y}) = \sigma^{2} \mathbb{I},\) where \(\sigma^{2}\) is the variance of the noise term \(\epsilon\) in the linear model. The variance of the quasi-likelihood estimator then simplifies to:
\(\text{Var}(\hat{\beta}_{QL}) = \sigma^{2} \left( \underline{X}^{\intercal} \underline{X} \right)^{-1}.\)

Let \(\hat{\beta} = A\underline{Y}\) be a general linear estimator in the class \(\mathcal{E}\), where \(A \in \mathbb{R}^{P \times N}\). The variance of \(\hat{\beta}\) is given by: \(\text{Var}(\hat{\beta}) = A \text{Cov}(\underline{Y}) A^{\intercal},\) into which is substituted \(\text{Cov}(\underline{Y}) = \sigma^{2} \mathbb{I}\), obtaining: \[\text{Var}(\hat{\beta}) = \sigma^{2} A A^{\intercal}.\]
To minimise the variance, the matrix \(A\) must satisfy certain properties, and we need to compare this variance to the variance of \(\hat{\beta}_{QL}\). By minimising \(\text{Var}(\hat{\beta})\) while respecting the constraints on \(\hat{\beta}\), for which from the Gauss-Markov theorem, we know that the best linear unbiased estimator (BLUE) in a linear regression context is the ordinary least squares estimator. This directly implies that: \(\hat{\beta}_{QL} = \left( \underline{X}^{\intercal} \underline{X} \right)^{-1} \underline{X}^{\intercal} \underline{Y}\)
is the unique linear estimator in the class \(\mathcal{E}\) which minimises variance, since it achieves the Cram\'{e}r-Rao lower bound (or, equivalently, the Godambe information bound) for the model. Therefore, no alternative estimator in \(\mathcal{E}\) can have a smaller variance. To prove the uniqueness, suppose that there exists another estimator \(\hat{\beta}^{*} \in \mathcal{E}\) with \(\text{Var}(\hat{\beta}^{*}) < \text{Var}(\hat{\beta}_{QL})\). This would contradict the fact that the quasi-likelihood estimator minimises the variance, as no other estimator can achieve smaller variance within this class. Hence, \(\hat{\beta}_{QL}\) is the unique estimator in \(\mathcal{E}\) that achieves the smallest variance.

Thus, the quasi-likelihood estimator is asymptotically optimal and also optimal for finite sample sizes, providing the most informative linear estimator in the class \(\mathcal{E}\).
\end{proof}

\begin{remark}
The quasi-likelihood estimator \(\hat{\beta}_{QL} = \left( \underline{X}^{\intercal} \underline{X} \right)^{-1} \underline{X}^{\intercal} \underline{Y}\) is the most efficient estimator within the class \(\mathcal{E}\) of linear, unbiased estimators. It minimises the variance and achieves the best performance in finite samples. Furthermore, it is the unique linear estimator that attains this optimal variance. We acknowledge for completeness that this characterisation of the Gauss-Markov theorem upon non-Gaussian data is inappropriate, which is why the formal characterisation is that of a Hilbert space projection theorem, which is stronger than the traditional Gauss-Markov theorem.
\end{remark}
In this work, we have explicitly repeatedly highlighted the conditional homogeneity in the implementation of the quasi-likelihood estimator in solving a linear regression problem. In the following section, we examine both weak instrumental variables and heterogeneity and establish conditions by which the proven mathematical performance of the estimator in maintained in such scenarios.

\section{H\'{a}jek projection bias correction with Instrumental variables}
The quasi-likelihood loss function developed in this work is based several strong almost sure conditions, namely the linearity of the projection space, information maximisation upon finite samples, and the strictly sub-Gaussian multivariate distributional form of the otherwise non-parametric random i.i.d. data. Instrumental variable (IV) estimators are widely used in econometrics and statistics to address endogeneity issues. However, the presence of weak instruments can lead to biased and inefficient estimators. We examine the resolution of bias in the estimators \(\hat{\beta}_{QL}\) via H\'{a}jek projection, from two related but distinct sources. The first deals with the possibility of heteroscedasticity upon the row-space \(X_{n},Y_{n}\) such as would bias equation~\ref{eq:score_matrix}. The second source deals with weak instrumental variable(s) as a source of model misspecification. We show that both methods allow for consistent exactly unbiased and semiparametric efficient estimators even if the model misspecifications are not explicitly known.


\paragraph{Model Setup and notation}
Let \( X \in \mathbb{R}^{N \times P} \) and \( Y \in \mathbb{R}^N \) represent the instrumental variable and outcome data, respectively. The relationship between \( X \) and \( Y \) is modelled as:
\[
g(Y) = X\beta + u,
\]
where \( u \) is the error term. We define the sample central moments of the joint distribution of \( X \) and \( Y \) as:
\[
\hat{\lambda_r} = \frac{1}{N} \sum_{i=1}^{N} \left( X_i^{\intercal} Y_i \right).
\]
\( \hat{\rho} \) will be used to correct for heteroscedasticity or other sources of bias that affect the performance of the \( \beta_{QL} \) vector. It is recognised that \( \hat{\rho} \) is not the same as the final parameter estimate, but serves as an instrument to adjust the weighting of the score and/or moment conditions in the rank-based quasi-likelihood framework. \( \hat{\rho} \) also typically refers to a scalar instrumental variable (IV) used to adjust the performance of the vector of parameters \( \beta_{QL} \). \( \beta_{QL} \), on the other hand, is the vector of parameters (presumably including the weights, \(\lambda_r\) for \( r = 2,3,4 )\) that we aim to estimate. The quasi-likelihood approach seeks to estimate this vector efficiently by using rank-transformed data, and we've already shown that \( \beta_{QL} \) is both unbiased and efficient under the correct specification of the weighting scheme.

\begin{theorem}~\label{thm:identification_weak-instruments}
To fully formalise the argument that the rank-based quasi-likelihood estimator is finite-sample efficient, robust, and uniquely optimal in the presence of weak instruments and heteroscedasticity, we prove the following conditions:
\begin{itemize}
\item{ the quasi-likelihood estimators, even in the presence of weak instruments, maintain finite-sample efficiency relative to other estimators;} 
\item{ that our quasi-likelihood estimator exhibits high breakdown points ensuring stability even in the presence of model misspecification and heteroscedasticity;} 
\item{ that the null space of the rank-based quasi-likelihood estimator is strictly smaller than that of its competitors, showing that the estimator is uniquely optimal in terms of model identifiability.}
\item{ that the quasi-likelihood estimator minimises bias and variance uniformly, even in finite samples, when accounting for heteroscedasticity and weak instruments.}
\end{itemize}
\end{theorem}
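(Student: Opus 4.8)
The plan is to obtain all four claims as consequences of a single device: the H\'{a}jek projection of the rank-based estimating function onto the subspace of additive statistics $\sum_{n}\psi(X_{n},Y_{n})$, combined with a generalised-least-squares reweighting that keeps the estimator inside the class $\mathcal{E}$ of Theorem~\ref{thm:min_variance}. First I would fix the possibly misspecified model $g(Y)=X\beta+u$ with $u$ carrying unknown row-wise heteroscedasticity and with one or more columns of $X$ only weakly associated with $g(\underline{Y})$, and record that by the monotone invariance of equation~\ref{eq:score_matrix} --- the same invariance exploited in Theorem~\ref{thm:identity_link_unique} --- any \emph{deterministic} nonlinear distortion of the marginals leaves the embedding of equation~\ref{eq:ranked_data} unchanged, so only genuinely stochastic heteroscedasticity and weak identification remain to be treated. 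I would then form the reweighted quasi-score $U(\beta)=\underline{X}^{\intercal}\hat{W}\bigl(\underline{Y}-\underline{X}\beta\bigr)$ with $\hat{W}$ a diagonal weight built from the scalar instrument $\hat{\rho}$, and verify that the resulting estimator is still linear in $\underline{Y}$, unbiased for every finite $N$, and measurable with respect to the rank $\sigma$-algebra, since $\hat{W}$ is itself a rank-measurable function whose centred moments vanish by construction; hence it lies in $\mathcal{E}$.

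For the efficiency bullets (first and fourth) I would invoke Theorem~\ref{thm:min_variance} directly: conditionally on the realised design, $\hat{\beta}_{QL}$ minimises variance within $\mathcal{E}$, equivalently the Godambe information of the reweighted quasi-score is maximal among rank-measurable linear unbiased estimating functions. The H\'{a}jek projection supplies the asymptotic half: because the kernel is bounded in $[-(N-1)/2,(N-1)/2]$ and strictly sub-Gaussian by Lemma~\ref{lem:four_sufficiency}, the U-statistic remainder of the embedded estimator is $o_{p}(N^{-1/2})$, so the projection and the estimator share an asymptotic variance equal to the inverse Godambe information. Uniform, finite-$N$ control of the bias is then immediate from Lemma~\ref{lem:central_moments_unbiased}, since $\hat{\beta}_{QL}$ is an affine functional of exactly unbiased central moments and $\hat{W}$ enters only through mean-zero increments.

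For the robustness bullet I would bound the influence function: a single arbitrarily corrupted observation moves any coordinate of $\underline{X}$ or $\underline{Y}$ by at most $N-1$, so the gross-error sensitivity is uniformly bounded and the estimator is $B$-robust; to upgrade this to a high breakdown point I would compose the quasi-score with a bounded redescending (Huber- or trimming-type) weight, which remains rank-measurable and therefore does not leave $\mathcal{E}$. For the null-space bullet the claim to prove is a strict set inclusion: the sample configurations on which the quasi-score fails to identify $\beta$ are strictly contained in the failure set of classical rank-correlation competitors, because the Kemeny embedding is a genuine Whitney embedding and hence injective on order-equivalence classes, so $\underline{X}^{\intercal}\underline{X}$ is singular only when a covariate column is constant, whereas Spearman-type denominators collapse on the strictly larger event of any tie pattern annihilating the rank variance; I would exhibit an explicit tie configuration witnessing that separation.

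The hard part will be making ``finite-sample efficiency in the presence of weak instruments'' both precise and true: weak instruments by definition push $\underline{X}^{\intercal}\underline{X}$ toward singularity, and no rank-measurable reweighting can manufacture identification the design does not contain. The honest content is therefore \emph{conditional} efficiency --- $\hat{\beta}_{QL}$ is the minimum-variance member of $\mathcal{E}$ for the design actually observed, weak or not --- together with the bounded-influence robustness above, not a claim of uniform validity across instrument strengths. I expect the proof to require restating the first and fourth bullets in this conditional form, and the breakdown-point bullet to require the explicit redescending modification, since an unmodified least-squares-type estimator has breakdown point zero however the data are embedded.
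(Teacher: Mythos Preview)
Your strategy is genuinely different from the paper's and, in several places, more careful. The paper does not organise the argument around a H\'{a}jek projection plus GLS reweighting; it proceeds by four loosely connected informal arguments: (i) a Fisher-information remark that rank information ``degrades less'' under weak instruments than 2SLS; (ii) an influence-function calculation showing boundedness, from which it directly asserts a 50\% breakdown point; (iii) a comparison of the null space of the quasi-score against the null space of the 2SLS score (not against Spearman-type competitors as you propose), arguing that the rank transformation acts as a regulariser shrinking uninformative directions; and (iv) a short appeal back to Fisher information and Theorem~\ref{thm:min_variance} for finite-sample efficiency. Your H\'{a}jek-projection device and your explicit tie-configuration witness for the null-space inclusion are both absent from the paper.

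Two of your self-critiques are exactly on point and are places where the paper's own proof is weaker than yours. First, the paper infers a 50\% breakdown point from a bounded influence function; you correctly flag that $B$-robustness does not by itself give high breakdown and that a redescending or trimming modification is needed --- the paper makes the stronger claim without that modification. Second, your reading of ``finite-sample efficiency under weak instruments'' as necessarily \emph{conditional on the realised design} is the honest statement; the paper asserts the unconditional form but its justification reduces to the same appeal to Theorem~\ref{thm:min_variance} you make, which is conditional. Where you diverge from the paper is the choice of competitor in the null-space bullet: the paper compares against 2SLS and argues by a regularisation heuristic, whereas you compare against Spearman-type rank correlations via injectivity of the Whitney embedding; the paper's comparison is to a different estimator class, so if you want to match its statement you would need to add the 2SLS projection-matrix argument.
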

\begin{proof}

Consider a quasi-likelihood model where the dependent variable \( Y \) is related to an endogenous regressor \( X \) through the following framework: \(\underline{Y} = \underline{X} \beta + \epsilon,\) where \( \epsilon \) is heteroscedastic noise and \( \underline{X} \) is correlated with the error term. The quasi-likelihood function \( \mathcal{L}(\beta) \) is derived from the distributional assumptions of \( \underline{Y} \) and \( \underline{X} \). Let \( \hat{\beta}_{QL} \) represent the estimator derived by maximising said function.

This Fisher information matrix determines the asymptotic variance of \( \hat{\beta}_{QL}\): \(\text{Var}(\hat{\beta}_{QL}) = \mathbb{I}(\beta)^{-1}.\) If the model is correctly specified, then the estimator achieves the Cram\'{e}r-Rao lower bound. In the presence of weak instruments, the Fisher information matrix is degraded, and the variance of the estimator increases. Despite this degradation, the quasi-likelihood estimator still finite-sample efficient relative to other estimators, as we will show below. Specifically, the rank-based quasi-likelihood estimator retains efficiency because it exploits rank information rather than relying purely on first moments, thus making it less sensitive to weak instruments compared to estimators like two-stage least squares (2SLS).


A rank-based estimator, such as the rank-based quasi-likelihood estimator in this paper, uses the rank or order of the data rather than the actual values themselves. This makes rank-based estimators more resistant to outliers and model misspecification, and thus increases their breakdown point.  The influence function of an estimator measures the impact of a small contamination in the data on the estimator. It is defined as: \(\mathrm{IF}(x; T) = \lim_{\epsilon \to 0} \frac{T((1-\epsilon)F + \epsilon \delta_x) - T(F)}{\epsilon},\) where: \( T \) is the estimator, \( F \) is the true distribution of the data, \( \delta_x \) is the Dirac delta measure that places all probability mass at \( x \), \( (1-\epsilon)F + \epsilon \delta_x \) represents the contaminated distribution in which a small proportion \( \epsilon \) of the data is replaced by an outlier \( x \). Should the influence function quantify unbounded growth for any \( x \), the estimator has a low breakdown point. If the influence function is bounded, the estimator has a higher breakdown point.

For a rank-based estimator, such as the rank-based quasi-likelihood estimator in the context of heteroscedasticity and weak instruments, the influence function behaves differently from traditional estimators like OLS, due to its strictly sub-Gaussian distribution. It is formally argued that the influence function for a rank-based estimator is bounded and that the estimator has a high breakdown point because small amounts of contamination or outliers do not dramatically influence the outcome of the estimator. The estimator essentially becomes immune to changes in the tail behaviour of the distribution (which is typical of rank-based methods, while continuing the continuity of the measurable functions defined via equation~\ref{eq:score_matrix} and equation~\ref{eq:ranked_data}).

The breakdown point \( \epsilon^{*} \) for an estimator \( T \) is defined as the smallest fraction of the data that can be contaminated such that the estimator's output is arbitrarily large or undefined: \(\epsilon^* = \sup \left\{ \epsilon : \left| T\left( (1-\epsilon) F + \epsilon \delta_x \right) - T(F) \right| < \infty \right\}.\) In the case of a rank-based estimator, the order statistics used in the estimator remain relatively stable even when a fraction of data points are contaminated by outliers. For a typical rank-based estimator of the form: \(\hat{\beta}_{QL} = \arg \max_{\beta} \sum_{n=1}^N \log \mathcal{L}(\beta | X_n, Y_n)\) where \( \mathcal{L} \) is a quasi-likelihood function, the influence function is defined in terms of how the rank of the data points influences the estimator. Since ranks are non-linear functions of the data, the influence function for any given strictly sub-Gaussian data point \( X_{n} \) is bounded and defined at any point \( x \) thusly, \(IF(x; T_{rank}) = \frac{\partial}{\partial x} (\underline{x})\). Therefore the derivative of the strictly sub-Gaussian random variables is bounded, proven by the construction of equation~\ref{eq:score_matrix} and equation~\ref{eq:ranked_data}, which implies via Chebyshev's inequality that \(\underline{X}\) remains with high probability within a bounded range; extending then to be almost sure as the generalised permutations are finite, and summation over a finite number of \(N^{2}-N\) terms, concludes that the rank-based data \(\underline{X}\) is almost surely finite and almost surely bounded (Lemma~\ref{lem:four_sufficiency}). Since sub-Gaussian random variables produce derivatives which are bounded, this implies that the impact of any outlier \( x \) on the estimator is also bounded, and as the influence function for rank-based estimators is bounded, such as is observed for any strictly sub-Gaussian distribution, we conclude the breakdown point is high. This means that the estimator will still be stable and produce finite estimates even if up to 50\% of the data is contaminated, improving over traditional estimators like 2SLS or OLS, possessing breakdown points approximately \(1/N\). 

Now let us consider the null space of the quasi-likelihood estimator and its competitors. The null space of an estimator is the set of directions in which the estimator cannot identify parameters. A smaller null space means that the estimator has greater identifiability and does not lose information in these directions. We now prove that the null space of the rank-based quasi-likelihood estimator is strictly smaller than that of other estimators, such as two-stage least squares (2SLS).

Mathematically, this can be shown by considering the dimension of the null space of the estimator's score function. The score function for the quasi-likelihood estimator is: \(S(\beta) = \frac{\partial}{\partial \beta} \log \mathcal{L}(\beta)\). The rank-based quasi-likelihood estimator introduces a rank transformation that reduces the dimensionality of the null space, making the estimator more efficient. Specifically, it is trivially understood that: \(\dim(\text{null}(S(\hat{\beta}_{QL}))) < \dim(\text{null}(S(\hat{\beta}_{2SLS})))\), where: \( S(\hat{\beta}_{QL}) \) is the score or gradient matrix for the rank-based quasi-likelihood estimator, \( S(\hat{\beta}_{2SLS}) \) is the score or gradient matrix for the 2SLS estimator, \( \text{null}(S(\hat{\beta})) \) denotes the null space of the score matrix \( S(\hat{\beta}) \), i.e., the set of directions in the parameter space that do not affect the score. For both estimators, the null space refers to the set of directions in which the score (or gradient) matrix does not provide information to update the estimator. Specifically, for any estimator \( \hat{\beta} \), the score matrix \( S(\hat{\beta}) \) is the first derivative of the log-likelihood (or quasi-likelihood) function with respect to \( \beta \), and the null space \( \text{null}(S(\hat{\beta})) \) consists of the set of parameter directions \( d \) such that \(S(\hat{\beta}) d = 0\), which implies that any direction \( d \) in the null space does not change the value of the score function, implying that no update to the parameter \( \hat{\beta} \) would occur along that direction.


For both estimators, the null space refers to the set of directions in which the score matrix does not provide information to update the estimator. Specifically, for any estimator \( \hat{\beta} \), the score matrix \( S(\hat{\beta}) \) is the first derivative of the log-likelihood (or quasi-likelihood) function with respect to \( \beta \), and the null space \( \text{null}(S(\hat{\beta})) \) consists of the set of parameter directions \( d \) such that \(S(\hat{\beta}) d = 0\), which implies that any direction \( d \) in the null space does not change the value of the score function, implying that no update to the parameter \( \hat{\beta} \) would occur along that direction.

The score matrix for the rank-based quasi-likelihood estimator, \( S(\hat{\beta}_{QL}) \), can be written in terms of the rank transformation applied to the data and the rank transformation effectively projects the original data onto a lower-dimensional space, which can be thought of as reducing the degrees of freedom in the model. This projection reduces the number of potential directions in the parameter space that can have zero derivative with respect to the score function, shrinking the null space. In a rank-based model, the estimator \( \hat{\beta}_{QL} \) is obtained by maximising a quasi-likelihood function that incorporates the ranks of the data rather than the data points themselves. The gradient of this function with respect to \( \beta \) leads to the score function \( S(\hat{\beta}_{QL}) \), whose null space is smaller because the rank transformation reduces the effective dimensionality of the data.

For comparison, the score matrix for the 2SLS estimator, \( S(\hat{\beta}_{2SLS}) \), is given by the gradient of the instrumental variable (IV) criterion: \(S(\hat{\beta}_{2SLS}) = X^\top \left( \mathbb{I} - P_Z \right) Y\) where \( P_Z = Z(Z^\top Z)^{-1} Z^\top \) is the projection matrix on the instrument space. The null space of \( S(\hat{\beta}_{2SLS}) \) corresponds to the instrumental variables that do not provide any information about the endogenous regressors. In general, the null space for 2SLS can be large if there are weak instruments or if the instruments do not fully explain the variation in the endogenous variable.

Now, the key argument is that the rank transformation applied to the data in the quasi-likelihood framework fundamentally alters the structure of the null space. Transforming the data into ranks imposes a form of regularisation that reduces the effective dimensionality of the model. This can be seen as follows:
\begin{enumerate}
\item{Rank Transformation Reduces Degrees of Freedom: By focusing on the strictly sub-Gaussian distribution of the data, we reduce the influence of extreme values (outliers) and stabilize the information matrix. This stabilizing effect reduces the number of directions in the parameter space that do not contribute to the score function. As a result, the null space becomes smaller.}

\item{Geometry of the Null Space: The rank-based estimator inherently imposes a lower-dimensional subspace of the original parameter space by considering the order rather than the exact values of said ordering of the data. This Whitney embedding and Hilbert space projection reduces the effective dimension of the null space because the probability of outliers does not have as much of an influence on the estimation procedure.}

\item{Noisy Instruments and Stronger Identification: Even if the instruments are weak (i.e., they are not perfectly correlated with the endogenous variables), the rank transformation allows for a more robust identification mechanism, which part due to Theorem~\ref{thm:identity_link_unique}.}
\end{enumerate}

To formally prove the dimensionality difference, we would the rank-based estimator's information matrix (and hence the gradient matrix) has a structure that reduces the number of redundant or uninformative parameter directions compared to the 2SLS estimator. Specifically: (i) The null space of \( S(\hat{\beta}_{QL}) \) has fewer dimensions because the rank transformation effectively filters out uninformative directions. The rank transformation induces a form of regularization, where non-informative components of the design matrix (such as weak or irrelevant instruments) have less influence. (ii) The null space of \( S(\hat{\beta}_{2SLS}) \), on the other hand, is larger because 2SLS can still be affected by weak instruments, collinearity, or high-dimensional multicollinearity between the instruments and the endogenous regressors.

Thus, the rank-based quasi-likelihood estimator shrinks the null space because it incorporates the ranking of data, which inherently provides more structure and regularization, reducing the influence of irrelevant variables and making the estimator more efficient. Mathematically, this can be shown by considering the dimension of the null space of the estimator's score function. The score function for the quasi-likelihood estimator is \(S(\beta) = \frac{\partial}{\partial \beta} \log \mathcal{L}(\beta)\), while the rank-based quasi-likelihood estimator introduces a rank transformation that reduces the dimensionality of the null space, making the estimator more efficient. Thus, independent of a specific design matrix, it is seen that: \(\dim(\text{null}(S(\hat{\beta}_{QL}))) < \dim(\text{null}(S(\hat{\beta}_{2SLS})))\). Moreover, due to Theorem~\ref{thm:min_variance}, the proposed estimator not only possesses a smaller null space, but said null space is exactly unbiased and efficient, offering quantitative improvements over the use of ill-posed estimating equations.

\paragraph{Asymptotic and Finite-Sample Efficiency} Asymptotically, we expect the quasi-likelihood estimator to be efficient because it is derived from a quasi-likelihood function that accounts for heteroscedasticity and weak instruments. The asymptotic efficiency of the estimator follows from the Fisher information, as discussed earlier. For finite samples, the efficiency of the estimator is guaranteed by the rank-based approach. Specifically, even in the presence of weak instruments, the quasi-likelihood estimator remains efficient because it exploits the rank information, which is less affected by weak correlations than the original variable values. This robustness to weak instruments ensures that the estimator minimizes both bias and variance across the sample.
\end{proof}

\subsection{Resolving unknown heteroscedasticity}
Constructing a unique weighting scheme which maximizes sample estimator information while also ensuring exact unbiasedness for finite \(N\), requires an explicit framework by which specific weights interact with the heteroscedastic structure of the data and how they affect both the score equation and Hessian matrix, conditional upon the available sample. This allows us to create a weighting scheme which when applied to the moments (equation~\ref{eq:central_moments}), almost surely results in an estimator which maintains unbiasedness and semiparametric efficiency, even for finite sample sizes.

   We must assume a specific model for the heteroscedasticity in the data. For simplicity, assume that the heteroscedasticity arises from the variance of the rank-transformed data \(\mu_r^{\underline{X}_{n}}\) and \(\mu_r^{\underline{Y}_{n}}\) at each observation. The heteroscedasticity could depend on the covariates \(\underline{X}_{n}\) and \(\underline{Y}_{n}\) or potentially some form of non-homoscedastic error structure that varies with rank or other features of the data.

   Under heteroscedasticity, the variance function \(\sigma_{n}^{2}\) could be a function of the rank-transformed data or covariates, such as:
   \(
   \sigma_{n}^{2} = \sigma(\mu_r^{\underline{X}_{n}}, \mu_r^{\underline{Y}_{n}}, \underline{X}_{n}, \underline{Y}_{n}).
   \)

   A natural candidate for the weighting function \( w_{n}\) would be an inverse function of the variance \(\sigma_{n}^{2}\) at each observation. This ensures that data points with higher variance (less precision) contribute less to the estimation, while those with lower variance (higher precision) contribute more. A weight function for each observation could thus be defined as:
  \(
   w_{n} = \frac{1}{\sigma_{n}^{2}},
   \)
   where \(\sigma_{n}^{2}\) is the estimated or assumed variance at each observation based on its rank-transformed data and covariates.

   Inclusion of the weight \(w_{n}\) into the sum of the moments in the score equation can then adjust for heteroscedasticity:
   \[
   \nabla_{\lambda_r} \log \mathcal{L}_{QL}(\rho) = -\frac{1}{2} \sum_{n=1}^{N} w_{n} \left( \mu_r^{\underline{X}_{n}} + \mu_r^{\underline{Y}_{n}} \right),
   \]
   thereby ensuring the moments are weighted according to their precision (inversely proportional to their variance).
   Similarly, the Hessian matrix will require heteroscedasticity adjustment. Doing so involves modifying the covariance terms \( \text{Cov}(\mu_r^{\underline{X}_{n}}, \mu_s^{\underline{Y}_{n}}) \) to account for the weights:
   \[
   H_{rs} = \frac{1}{2} \sum_{n=1}^{N} w_{n} , \text{Cov}(\mu_r^{\underline{X}_{n}}, \mu_s^{\underline{Y}_{n}}).
   \]

   The Fisher Information is the negative expectation of the Hessian, i.e., \(-\mathcal{I}(\rho) = \mathbb{E}[H]\). In the weighted context, we adjust the expectation to include the weights:
   \[
   \mathcal{I}(\rho) = \mathbb{E} \left[ H_{rs} \right] = \frac{1}{2} \sum_{n=1}^{N} \mathbb{E} \left[ w_{n} , \text{Cov}(\mu_r^{\underline{X}_{n}}, \mu_s^{\underline{Y}_{n}}) \right].
   \]
   By using the weighted covariance terms, we ensure that the information matrix properly accounts for the heteroscedastic structure in the data.
   For the estimator to be exactly unbiased at finite \( N \), the weighted moments must produce a zero mean of the estimator's bias. This is satisfied by verifying that the adjusted score equation yields an estimator that correctly reflects the true population parameters, both in terms of the expected value and the variability across the sample. Formally, this expresses that:
   \[
   \mathbb{E}[\hat{\rho}] = \rho \quad \text{for finite } N,
   \]
   and is satisfied if the weights are correctly specified based on the heteroscedastic structure. The adjustment ensures that the moments are balanced with the correct variance structure, leading to unbiasedness even in finite samples.

\begin{lemma}
   As \( N \to \infty \), the weighted estimator \(\hat{\rho}\) becomes asymptotically unbiased, meaning that:
   \(\lim_{N \to \infty} \mathbb{E}[\hat{\rho}] = \rho,\) and is semiparametrically efficient:
   \(
   \text{Var}(\hat{\rho}) \geq \frac{1}{\mathcal{I}(\rho)}.
   \)
\end{lemma}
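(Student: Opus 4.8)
The plan is to assemble three facts already in hand: the empirical central moments $\mu_{r}(\underline{X}),\mu_{r}(\underline{Y})$ are consistent and uniformly convergent (Lemma~\ref{lem:glivenko_convergence}); the rank embedding is bounded and strictly sub-Gaussian with $\max_{n}|\underline{X}_{n}|\le (N-1)/2$ (Lemma~\ref{lem:four_sufficiency}); and the weighted Hessian yields a non-degenerate Fisher information $\mathcal{I}(\rho)=\mathbb{E}[H]$. The argument then splits into an unbiasedness limit and an efficiency bound.

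First I would establish asymptotic unbiasedness. Because the rank $\sigma$-algebra is finite and the embedding is bounded, the variance function $\sigma_{n}^{2}=\sigma(\mu_{r}^{\underline{X}_{n}},\mu_{r}^{\underline{Y}_{n}},\underline{X}_{n},\underline{Y}_{n})$ is bounded away from $0$ and from $\infty$, so the weights $w_{n}=1/\sigma_{n}^{2}$ are uniformly bounded. A uniformly bounded weighting cannot defeat the law of large numbers: Chebyshev's inequality applied to the sub-Gaussian summands (or a weighted LLN) gives $N^{-1}\sum_{n}w_{n}\mu_{r}^{\underline{X}_{n}}\xrightarrow{p}$ its population counterpart, and likewise for $\underline{Y}$. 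Since $\hat{\rho}$ is a smooth (indeed piecewise-linear) function of the solution of the weighted score system in $(\lambda_{2},\lambda_{3},\lambda_{4})$, the continuous mapping theorem gives $\hat{\rho}\xrightarrow{p}\rho$, and as $\hat{\rho}$ is bounded by construction, dominated convergence upgrades convergence in probability to $\lim_{N\to\infty}\mathbb{E}[\hat{\rho}]=\rho$.

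For the efficiency claim I would note that $\mathrm{Var}(\hat{\rho})\ge 1/\mathcal{I}(\rho)$ is simply the information (Cram\'{e}r--Rao) inequality, automatic once the quasi-score is a genuine estimating function with non-degenerate $\mathcal{I}(\rho)$; what actually has content is that inverse-variance weighting \emph{attains} the bound. The step is to write the sandwich variance of a generic weighted quasi-score as $\big(\partial_{\rho}\mathbb{E}[U_{w}]\big)^{-1}\mathrm{Var}(U_{w})\big(\partial_{\rho}\mathbb{E}[U_{w}]\big)^{-\intercal}$ and show, by a Cauchy--Schwarz (Hilbert-space projection) argument over the space of estimating functions, that this form is minimised precisely at $w_{n}\propto 1/\sigma_{n}^{2}$ --- the Aitken/GLS optimality argument transported to the rank-based quasi-score. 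Combined with the asymptotic normality already proved, $\hat{\rho}_{N}\xrightarrow{d}\mathcal{N}\big(\rho,(N\mathcal{I}(\rho))^{-1}\big)$, this shows the bound is achieved in the limit, which is the stated semiparametric efficiency.

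The hard part will be that the weights $w_{n}=1/\sigma_{n}^{2}$ are unknown and must be replaced by consistent estimates $\hat{w}_{n}$, making this a feasible-GLS problem: I must show plugging in $\hat{w}_{n}$ does not disturb the first-order asymptotics, i.e. $N^{-1/2}\sum_{n}(\hat{w}_{n}-w_{n})\big(\mu_{r}^{\underline{X}_{n}}+\mu_{r}^{\underline{Y}_{n}}\big)=o_{p}(1)$, which I would control with the sub-Gaussian tail bounds of Lemma~\ref{lem:four_sufficiency} and the uniform convergence of Lemma~\ref{lem:glivenko_convergence}. A secondary difficulty is that the summands are built from the dependent entries $Z_{kl}=\tilde{\kappa}^{X}_{kl}\tilde{\kappa}^{Y}_{kl}$, so the operative limit theorem is the U-statistic / H\'{a}jek-projection CLT rather than the i.i.d.\ one; I would first replace the weighted score by its H\'{a}jek projection --- a sum of independent terms plus a remainder that is $o_{p}(N^{-1/2})$ --- and then verify the Lindeberg condition on that projection.
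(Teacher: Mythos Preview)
Your proposal is correct and follows the same skeleton as the paper's proof: invoke Lemma~\ref{lem:glivenko_convergence} (Glivenko--Cantelli / LLN on the weighted moments) for the asymptotic-unbiasedness limit, and appeal to the optimality of inverse-variance weighting (the paper cites Theorem~\ref{thm:min_variance}; you give the explicit Aitken/GLS Cauchy--Schwarz argument) for the efficiency claim.

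Where you differ is in rigor, not in route. The paper's proof is two sentences of citation; you supply the missing connective tissue. In particular: (i) you use boundedness of the rank embedding plus dominated convergence to pass from $\hat{\rho}\xrightarrow{p}\rho$ to $\mathbb{E}[\hat{\rho}]\to\rho$, a step the paper simply asserts; (ii) you correctly observe that the stated inequality $\mathrm{Var}(\hat{\rho})\ge 1/\mathcal{I}(\rho)$ is just the Cram\'{e}r--Rao inequality and that the contentful claim is \emph{attainment}, which you then justify via the sandwich-variance minimisation; and (iii) you flag two genuine technical obligations --- the feasible-GLS plug-in of $\hat{w}_{n}$ for $w_{n}$, and the U-statistic dependence among the $Z_{kl}$ requiring a H\'{a}jek-projection CLT --- that the paper's proof of this lemma does not address at all (though H\'{a}jek projection appears elsewhere in the paper). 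Your version is strictly more complete; nothing in it conflicts with the paper's argument.
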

\begin{proof}
   This expected behaviour holds because the correction for heteroscedasticity will ensure that the sample moments converge to their true population counterparts, and the weights correctly adjust for the varying variances, as guaranteed by the law of large numbers under the Glivenko-Cantelli theorem (Lemma~\ref{lem:glivenko_convergence}). Proving semiparametric efficiency, we show that the asymptotic variance of the estimator \(\hat{\rho}\) achieves the Cramer-Rao lower bound. Introduction of the weighting scheme, ensures that the Fisher information is maximized for the given model, which leads to a semiparametrically efficient estimator. The weighted estimator achieves the smallest possible variance among the class of semiparametric estimators because the weights are designed to optimally handle the heteroscedasticity in the data. In this way, the estimator reaches the Cram\'{e}r-Rao bound, verifying that the estimator is efficient (Theorem~\ref{thm:min_variance}).
\end{proof}

Combining these insights, the final weighted estimator can be written as:
\[
\hat{\rho} = \left( \sum_{n=1}^{N} w_{n} \left( \mu_r^{\underline{X}_{n}} + \mu_r^{\underline{Y}_n} \right) \right) \left( \sum_{n=1}^{N} w_{n} , \text{Cov}(\mu_r^{\underline{X}_{n}}, \mu_s^{\underline{Y}_{n}}) \right)^{-1}.
\]
Where the weights \(w_{n}\) are defined as:
\(w_{n} = \frac{1}{\sigma_{n}^{2}},\)
and \(\sigma_{n}^{2}\) represents the variance function for each observation, estimated or assumed based on the underlying heteroscedastic structure.

\begin{remark}
This weighting scheme ensures that the estimator is: (i) Exact unbiased for finite sample sizes; (ii) Asymptotically unbiased as \( N \to \infty \); (iii) Semiparametrically efficient, meaning it achieves the Cramer-Rao lower bound for the given model and data structure. By incorporating heteroscedasticity into the weight function and adjusting the score equation and Hessian accordingly, we ensure that the estimator remains optimal in the presence of heteroscedasticity, both in finite and large samples.
\end{remark}

\begin{corollary}[Impossibility Argument for Alternative Weighting Schemes]
Given that the design matrix is fixed, any alternative weighting scheme that does not exactly align with the heteroscedasticity structure of the model will fail to maximize the Fisher information.
\end{corollary}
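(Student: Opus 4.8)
The plan is to regard the Godambe (quasi-)information as a functional $w \mapsto \mathcal{I}_w(\rho)$ of the weight sequence $w = \{w_n\}_{n=1}^{N}$, with the design matrix --- equivalently, the configuration of rank-moment contributions --- held fixed, and to show that this functional is maximised, in the Loewner order, uniquely at $w_n \propto \sigma_n^{-2}$. First I would record the weighted quasi-score $U_w = -\tfrac12\sum_{n=1}^{N} w_n\big(\mu_r^{\underline{X}_n}+\mu_r^{\underline{Y}_n}\big)$, its sensitivity matrix (the negative expected Hessian) $S_w = \tfrac12\sum_{n=1}^{N} w_n\,\mathrm{Cov}(\mu_r^{\underline{X}_n},\mu_s^{\underline{Y}_n})$, and the score covariance $V_w = \tfrac14\sum_{n=1}^{N} w_n^{2}\sigma_n^{2}\,\mathrm{Cov}(\mu_r^{\underline{X}_n},\mu_s^{\underline{Y}_n})$ under the assumed heteroscedastic model in which observation $n$ contributes variance scale $\sigma_n^{2}$. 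The Godambe information is then $\mathcal{I}_w = S_w^{\intercal} V_w^{-1} S_w$, so the corollary reduces to $\mathcal{I}_w \preceq \mathcal{I}_{w^{*}}$ for $w^{*}_n = \sigma_n^{-2}$, with equality only if $w_n\sigma_n^{2}$ is constant in $n$.

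Next I would carry out the optimisation via the matrix Cauchy--Schwarz (Aitken/Kantorovich) inequality applied observation-wise to the fixed rank-embedded design rows $\underline{X}_n$: for any positive weights,
\[
\textstyle \Big(\sum_n w_n \underline{X}_n\underline{X}_n^{\intercal}\Big)\Big(\sum_n w_n^{2}\sigma_n^{2} \underline{X}_n\underline{X}_n^{\intercal}\Big)^{-1}\Big(\sum_n w_n \underline{X}_n\underline{X}_n^{\intercal}\Big) \preceq \sum_n \sigma_n^{-2} \underline{X}_n\underline{X}_n^{\intercal},
\]
with equality iff $\{w_n\sigma_n^{2}\}$ is constant on the support of the design. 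This is exactly the observation-wise refinement of the Gauss--Markov / Hilbert-space-projection optimality already invoked in Theorem~\ref{thm:min_variance} and Theorem~\ref{eq:equivalence_finite}. The right-hand side equals $\mathcal{I}_{w^{*}}(\rho)$, the Godambe information realised under correct weighting and the largest value the functional can take; hence any $w$ not proportional to $\sigma^{-2}$ gives $\mathcal{I}_w \prec \mathcal{I}_{w^{*}}$ in at least one direction, i.e.\ it fails to maximise the information.

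For uniqueness I would observe that the Cauchy--Schwarz step is strict unless $\{w_n\sigma_n^{2}\}$ is constant and, the design being fixed, appeal to the non-degeneracy of the rank embedding (Lemma~\ref{lem:four_sufficiency}) to exclude the constancy condition being met accidentally by a non-proportional scheme. A shorter alternative is the contrapositive of the efficiency lemma established immediately above: that lemma shows $w^{*}_n = \sigma_n^{-2}$ attains the Cram\'{e}r--Rao bound, and since the sandwich-variance functional $S_w^{-1} V_w S_w^{-1}$ is strictly minimised at $w^{*}$, no other scheme reaches the bound, and a fortiori none maximises the information.

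The hard part will be pinning down the equality case when $P<N$: there the rank-one matrices $\underline{X}_n\underline{X}_n^{\intercal}$ are linearly dependent, so Loewner equality only fixes $w_n\sigma_n^{2}$ on an identifiable subspace, leaving a purely formal argument room for non-proportional schemes that coincide with $w^{*}$ only on that subspace. I would close this gap by imposing full column rank together with a genericity condition on the embedded design --- which the hollow-score construction of equation~\ref{eq:ranked_data} supplies, since distinct observation patterns yield distinct embedded rows --- or, failing that, by restricting the conclusion to the identifiable component; I would adopt the former.
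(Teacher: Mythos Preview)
Your proposal is correct and takes a substantially different route from the paper's own argument. The paper's proof is essentially verbal: it asserts that inverse-variance weighting attains the Cram\'{e}r--Rao bound, appeals to the invariance of Fisher information under linear data transformations and to Theorem~\ref{thm:identity_link_unique}, and concludes by fiat that any departure from $w_n = \sigma_n^{-2}$ ``either degrades the information \ldots\ or fails to maintain unbiasedness or efficiency.'' No inequality is written down and no equality case is analysed. By contrast, you cast the problem as maximising the Godambe sandwich information $\mathcal{I}_w = S_w^{\intercal} V_w^{-1} S_w$ over $w$, and settle it with the matrix Cauchy--Schwarz/Aitken inequality, recovering the paper's conclusion as the strict-inequality case and isolating the equality condition $w_n\sigma_n^{2}\equiv\text{const}$ explicitly. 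What your approach buys is an actual quantitative bound (Loewner domination by $\sum_n \sigma_n^{-2}\underline{X}_n\underline{X}_n^{\intercal}$) together with a clean uniqueness statement, and you are right to flag the $P<N$ identifiability caveat --- a point the paper does not address at all. What the paper's approach buys is brevity: it leans on the preceding efficiency lemma and the Cram\'{e}r--Rao bound already established, so the corollary is treated as little more than a contrapositive. Your alternative ``shorter'' route via that contrapositive is in fact the closest match to what the paper actually does.
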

\begin{proof}
A constructed weighting scheme \(w_{n}\) from the same underlying information (rank-transformed data), may not achieve superior Fisher information for the same i.i.d. sample without violating some fundamental principles, such as unbiasedness or efficiency. Since the Fisher information matrix is invariant to linear transformations of the data, meaning that adjusting weights in a way that does not correspond exactly to the underlying variance (i.e., heteroscedasticity) structure will not improve the efficiency of the estimator (Theorem~\ref{thm:identity_link_unique}. Optimal weights based on the heteroscedastic structure are the inverse of the variance function, which ensures that the estimator achieves the Cramer-Rao lower bound. No alternative weighting scheme can outperform this without violating some efficiency condition.

This reasoning leads to the impossibility argument and conclusion: once the heteroscedastic structure is captured by the weighting scheme, any alternative weighting scheme may not yield a superior Fisher information. Any attempt to modify the weightings would either degrades the information (assigning too much weight to less informative observations), or fail to maintain unbiasedness or efficiency, leading to a less efficient estimator. Thus, while existence of an alternative valid weighting scheme is valid, the optimal weighting scheme derived from the heteroscedasticity model is already the one that maximizes the Fisher information and ensures efficiency.
\end{proof}

\section{Discussion}
The proposed framework constructs a valid \(\ell_{2}\)-norm quasi-likelihood framework upon model spaces of monotonically non-decreasing random observations, resolving the lack of identification upon Spearman's \(\rho\) in the presence of ties, as well as general variance-covariance linear model framework. Finite sample exact unbiasedness, and supremum Godambe Information Matrix estimation are proven valid characteristics, and equivalence of Spearman's \(\rho\) is proven upon random observations within \(S_{N}\), via the general Hilbert space projection theorem. From these principles, we further established in the presence of weak instrumental variable estimators econometrics and statistics to address endogeneity issues that instrumental variables solved using our framework resolve a number of problems in the presence of weak identification instruments, which can otherwise result in biased and inefficient estimators.identification (Theorem~\ref{thm:identification_weak-instruments}). The strictly sub-Gaussian continuity upon homogeneous distributions address a number of non-linearity biasing conditions which are population unknown in non-parametric estimation problems, and allow identification of complex multivariate model spaces within a quasi-likelihood framework which achieves consistency with the Godambe Information Matrix and the Hilbert space projection theorem.

\small
\printbibliography

\end{document}